\newcommand{\EE}{\mathbb{E}}
\newcommand{\RR}{\mathbb{R}}
\newcommand{\One}{\mathbf{1}}
\newcommand{\intd}{\, \text{d}}
\newcommand{\vm}[1]{\ensuremath{\mathbf{#1}}}
\newcommand{\vms}[1]{\boldsymbol{#1}}
\newcommand{\transpose}{\top}
\newcommand{\argmax}{\ensuremath{\text{argmax}}}
\newcommand{\pdf}{\text{pdf}}
\newcommand{\Nmc}{N_{\text{MC}}} 
\newcommand{\SD}{\Omega} 
\newcommand{\lb}{\text{\textbf{lb}}}
\newcommand{\ub}{\text{\textbf{ub}}}
\newcommand{\train}{\mathcal{T}}
\newcommand{\trainH}{\mathcal{T}_{\text{\normalfont{H}}}}
\newcommand{\trainI}{\mathcal{T}_{\text{\normalfont{I}}}}
\newcommand{\nd}{n_{\text{{\fontsize{2.5}{4}\selectfont \normalfont{kd}}}}} 
\newcommand{\kd}{n_{\text{{\fontsize{2.5}{4}\selectfont \normalfont{kd}}}}} 
\newcommand{\kds}{n_{\text{{\fontsize{2.5}{4}\selectfont \normalfont{2kd}}}}} 
\newcommand{\xopt}{\vm x^{\text{opt}}}
\newcommand{\Msr}{\vm M_{\text{H}}}
\newcommand{\bsr}{\vm b_{\text{H}}}
\newcommand{\vsr}{\vm v^{(k)}}
\newcommand{\MH}{\vm M_{\text{\normalfont{H}}}}
\newcommand{\bH}{\vm b_{\text{H}}}
\newcommand{\MHB}{\vm M_{\text{\normalfont{HB}}}}
\newcommand{\bHB}{\vm b_{\text{\normalfont{HB}}}}
\newcommand{\MB}{\vm M_{\text{\normalfont{B}}}}
\newcommand{\bB}{\vm b_{\text{\normalfont{B}}}}
\newcommand{\reffull}{\text{ref}_{\text{full}}}
\newcommand{\refB}{\text{ref}_{\text{B}}}
\newcommand{\refN}{\text{ref}_{\text{N}}}
\DeclarePairedDelimiter{\abs}{\lvert}{\rvert}
\newacronym{DFO}{DFO}{derivative-free optimization}
\newacronym{SQP}{SQP}{sequential quadratic programming}
\newacronym{MC}{MC}{Monte Carlo}
\theoremstyle{thmstyleone}%
\newtheorem{theorem}{Theorem}
\newtheorem{lemma}[theorem]{Lemma}%
\theoremstyle{thmstyletwo}%
\newtheorem{remark}{Remark}%
\theoremstyle{thmstylethree}%
\newtheorem{definition}{Definition}%
\begin{document}

\title[Hermite-type modifications of BOBYQA]{Hermite-type modifications of BOBYQA for optimization with some partial derivatives}


\author*[1,2]{\fnm{Mona} \sur{Fuhrländer}}\email{mona.fuhrlaender@tu-darmstadt.de}

\author[1,2]{\fnm{Sebastian} \sur{Schöps}}\email{sebastian.schoeps@tu-darmstadt.de}


\affil[1]{\orgdiv{Computational Electromagnetics Group (CEM)}, \orgname{Technische Universität Darmstadt}, \orgaddress{\street{Schlossgartenstr. 8}, \city{Darmstadt}, \postcode{64289}, \state{Germany}}}

\affil[2]{\orgdiv{Centre for Computational Engineering (CCE)}, \orgname{Technische Universität Darmstadt}, \orgaddress{\street{Dolivostr. 15}, \city{Darmstadt}, \postcode{64293}, \state{Germany}}}



\abstract{In this work we propose two Hermite-type optimization methods, Hermite least squares and Hermite BOBYQA, specialized for the case that some partial derivatives of the objective function are available and others are not. The main objective is to reduce the number of objective function calls by maintaining the convergence properties. Both methods are modifications of Powell's derivative-free BOBYQA algorithm. But instead of (underdetermined) interpolation for building the quadratic subproblem in each iteration, the training data is enriched with first and -- if possible -- second order derivatives and then (weighted) least squares regression is used. Proofs for global convergence are discussed and numerical results are presented. Further, the applicability is verified for a realistic test case in the context of yield optimization.
Numerical tests show that the Hermite least squares approach outperforms classic BOBYQA if half or more partial derivatives are available. In addition, the Hermite-type approaches achieve more robustness and thus better performance in case of noisy objective functions.}

\keywords{optimization, BOBYQA, Hermite interpolation, least squares, noise}



\maketitle

\section{Introduction}
\label{sec:intro}

In optimization we typically distinguish between gradient based and \gls{DFO} approaches. If gradients are available, they provide helpful information about the descent in specific points and thus, can improve the performance of the algorithm. With performance of the optimization algorithm we refer to the ability of reaching the optimal solution and the computational effort in order to reach this solution, which can be measured by the number of objective function calls. However, often gradients are not available. In this research we are interested in optimization problems where the objective function is expensive to evaluate, e.g., finite element simulations. Thus, we focus on the possibility of reducing the number of objective function calls by using all information we have.

In case of multidimensional optimization it often occurs that for some directions the partial derivatives are available and for others not. 
In sec.~\ref{sec:PracticalExample} we will provide a practical example in the context of yield optimization.
In this work we provide an optimization strategy well suited to benefit from the known derivatives, without requiring derivatives (or approximations of derivatives) for each direction.

Commonly used methods for solving nonlinear optimization problems are based on 
approximating the objective function quadratically in each iteration and solving this subproblem in a trust region in order to obtain the new iterate solution for the original problem.
In \gls{SQP} gradients are used to build the quadratic approximation based on second order Taylor expansions~\cite[Chap. 19]{Ulbrich_2012aa}. For \gls{DFO} Powell proposed the BOBYQA (Bound constrained Optimization BY Quadratic Approximation) method using polynomial interpolation for building the quadratic approximation~\cite{Powell_2009aa}. For both, \gls{SQP} and BOBYQA, there are several modifications and various popular implementations, see e.g.~\cite{Cartis_2019aa,Cartis_2021aa,Powell_COBYLA,Powell_LINCOA,Kraft_1988aa}.
Other \gls{DFO} methods based on quadratic approximations and additionally considering noisy objective data are proposed for example in~\cite{Cartis_2019aa} using sample averaging and restarts, in~\cite{Billups_2013} using weighted regression, and in~\cite{Menhorn_2022aa}, using Gaussian process regression. Besides methods based on quadratic approximations, particle swarm~\cite{Kennedy_1995} and genetic algorithms~\cite{Audet_2017aa} are frequently used for \gls{DFO}.

However, to our best knowledge, all these methods have in common that they use derivatives for \textit{all} directions or for \textit{none}.
If some derivatives are available, \gls{SQP} would approximate the missing ones using for example finite differences, classic \gls{DFO} methods would ignore all derivatives. 
Finite differences require at least one additional function evaluation per direction each time the gradient has to be calculated. Especially for higher dimensional problems, this leads to an enormous increase of the computational costs. Further, finite differences approximations are sensitive to noisy data. On the other hand, we assume that BOBYQA could perform better, i.e., would need less iterations and thus function evaluations, if we would provide all information we have. For that reason we propose to modify BOBYQA to enable the usage of \textit{some} derivative information. More precisely, we extend the Python implementation PyBOBYQA by Cartis et al.~\cite{Cartis_2019aa}. We propose two variants and, in accordance with Hermite interpolation, cf.~\cite[Chap. 6.6]{Hermann_HermiteInterpol} or~\cite{Sauer_1995aa}, we call them Hermite least squares and Hermite BOBYQA.
Further we investigate the impact of noisy objective functions and observe higher robustness compared to the original BOBYQA and \gls{SQP}.

This work is structured as follows. We start with the formulation of the problem setting in sec.~\ref{sec:setting} and an introduction into~\gls{DFO} and BOBYQA in sec.~\ref{sec:DFO}. In sec.~\ref{sec:firststep} we propose the Hermite least squares and Hermite BOBYQA approaches and provide numerical results in sec.~\ref{sec:NumericalTests}. We conclude the paper with a practical example from the field of electrical engineering in sec.~\ref{sec:PracticalExample} and some final remarks.

\section{Problem setting}
\label{sec:setting}
Even though \gls{SQP} is able to handle general nonlinear constraints, BOBYQA, as indicated by its name, only accepts bound constraints. Powell also proposed two more methods, LINCOA (LINearly Constrained Optimization Algorithm), which allows linear constraints, and COBYLA (Constrained Optimization BY Linear Approximations), which allows general constraints but uses only linear approximations~\cite{Powell_LINCOA,Powell_COBYLA}.
However, we focus on BOBYQA and thus,
we consider a bound constrained optimization problem with multiple optimization variables, i.e., for a function $f:\RR^n \rightarrow \RR$, an optimization variable $\vm x \in \RR^n$ and lower and upper bounds $\vm x_{\text{lb}}, \vm x_{\text{ub}} \in \RR^n$ the optimization problem reads
\begin{align}
&\min_{\vm x \in \RR^n} f(\vm x)\\
\label{eq:OptProb}
&\text{ s.t. } \vm x_{\lb} \leq \vm x \leq \vm x_{\ub}. \notag
\end{align}
We assume that the derivatives with respect to some directions $x_i$, $i\in\mathcal{I}=\lbrace 1,\dots,n\rbrace$ are known, others are not. 
We denote the index set of known first order derivative directions by $\mathcal{I}_{\text{d}} \subseteq \mathcal{I}$, such that the set of available first order derivatives is defined by
\begin{equation}
\mathcal{D} := \left\lbrace\frac{\partial}{\partial x_i}f\right\rbrace_{i\in \mathcal{I}_{\text{d}}}.
\label{eq:Setting_SetDerivs}
\end{equation}
In order to define the set of known second order derivatives, we introduce the 
tuple set $\mathcal{I}_{\text{2d}}\subseteq \mathcal{I} \times \mathcal{I}$.
Then, the set of available second order derivatives is given by
\begin{equation}
\mathcal{D}_2 := \left\lbrace\frac{\partial^2}{\partial x_i \partial x_j}f\right\rbrace_{(i,j) \in \mathcal{I}_{\text{2d}}}.
\label{eq:Setting_SetDerivs2nd}
\end{equation}
For the sake of simplicity we will focus on the practically relevant case of first order derivatives. However, the proposed methods can be straightforwardly adjusted for using the second order derivatives, cf. Sec.~\ref{sec:Num2ndderivs} and Appendix~\ref{sec:Hls_2ndDerivs}. Since we build a quadratic approximation, higher order derivatives are not of concern.
In the remainder of this paper, for better readability and without limitation of generality we assume that the $x_i$ are ordered such that we can define
\begin{equation}
\mathcal{I}_{\text{d}} = \lbrace 1,\dots,\nd\rbrace, \ \nd\leq n
\label{eq:IndexsetKnownDerivs}
\end{equation}
as the index set of directions for which we consider the first partial derivative to be known.

\section{Derivative-free optimization}
\label{sec:DFO}
Powell's BOBYQA algorithm is a widely used algorithm in the field of \gls{DFO}~\cite{Powell_2009aa}. The original implementation is in Fortran. Cartis et al. published a Python implementation called PyBOBYQA~\cite{Cartis_2019aa,Cartis_2021aa}. It contains some simplifications and several modifications (e.g. for noisy data and global optimization), but Powell's main ideas remain unchanged. In this work, on the programming side, we use PyBOBYQA as a basis and add some new features to it. While the original BOBYQA method (and also PyBOBYQA) are efficient in practice, 
it cannot be proven that they converge globally, i.e., that they converge from an arbitrary starting point to a stationary point~\cite[Chap. 10.3]{Conn_2009book}.
Conn et al. reformulated the BOBYQA method in~\cite[Chap. 11.3]{Conn_2009book}, maintaining the main concept, but enabling a proof of convergence -- on the cost of practical efficiency. For the theoretical considerations in this work, we take Conn's reformulation as a basis.
Before we come to our modifications for mixed gradient information, we recall the basics of \gls{DFO} methods and BOBYQA.

\subsection{Notation}
\label{sec:Notation}
Let $m(\vm x)$ be a polynomial of degree $d$ with $\vm x \in \RR^n$ and let $\Phi=\lbrace \phi_0(\vm x), \dots, \phi_q(\vm x) \rbrace$ be a basis in $\mathcal{P}_n^d$ and $q_1=q+1$. Further, we define the vector of basis polynomials evaluated in $\vm x$ by $\vms \Phi(\vm x):=(\phi_0(\vm x), \dots, \phi_q(\vm x))^{\transpose}$. The training data set is denoted by
\begin{equation}
\train = \lbrace(\vm y^0,f(\vm y^0)),\dots, (\vm y^p,f(\vm y^p))
\label{eq:Notation_TrainingDataSet}
\end{equation}
and $p_1 = p+1$. 
The original BOBYQA method is based on interpolation, however, we formulate the problem more generally as interpolation or least squares regression problem.
The system matrix and the right hand side of the interpolation or regression problem are then given by
\begin{equation}
\vm M \equiv \vm M(\Phi,\train) = 
\left(\begin{matrix}
\phi_0(\vm y^0) & \dots & \phi_q(\vm y^0)\\
\vdots & & \vdots \\
\phi_0(\vm y^p) & \dots & \phi_q(\vm y^p)
\end{matrix}\right)
\ \text{ and } \
\vm b \equiv \vm b(\train) = 
\left(\begin{matrix}
f(\vm y^0)\\
\vdots \\
f(\vm y^p)
\end{matrix}\right).
\label{eq:SystemMatrixRHSGeneral}
\end{equation}
If $p_1 = q_1$ the system matrix $\vm M$ is quadratic and $\vm v \in \RR^{q_1=p_1}$ solves the interpolation problem
\begin{equation}
\vm M \vm v = \vm b.
\label{eq:InterpolProblemGeneral}
\end{equation}
If $p_1 > q_1$ the system matrix $\vm M$ is in $\RR^{p_1\times q_1}$ and $\vm v \in \RR^{q_1}$, this leads to an overdetermined interpolation problem and can be solved with least squares regression 
\begin{equation}
\vm M \vm v \stackrel{\text{l.s}}{=} \vm b 
\quad \Leftrightarrow \quad
\min_{\vm v \in \RR^{q_1}} ||\vm M \vm v - \vm b||^2
\quad \Leftrightarrow \quad
\vm M^{\transpose}\vm M \vm v = \vm M^{\transpose} \vm b.
\label{eq:RegressProblemGeneral}
\end{equation}
If the matrix $\vm M$ in~\eqref{eq:InterpolProblemGeneral} is non-singular, the linear system~\eqref{eq:InterpolProblemGeneral} has a unique solution. Then, following~\cite[Chap. 3]{Conn_2009book}, the  corresponding training data set is said to be \textit{poised}. Analogously, if the matrix $\vm M$ in~\eqref{eq:RegressProblemGeneral} has full column rank, the linear system~\eqref{eq:RegressProblemGeneral} has a unique solution. And, following~\cite[Chap. 4]{Conn_2009book}, the  corresponding training data set is said to be \textit{poised for polynomial least-squares regression}. When talking about training data sets in the following, we always assume them to be poised, if not specifically noted otherwise.

Although many of the results hold for any choice of a basis, in the following we use the monomial basis of degree $d=2$. Thus, if not mentioned differently, for the remainder of this paper, $\Phi$ is defined by the $(n+1)(n+2)/2$-dimensional basis
\begin{equation}
\Phi = \left\lbrace 1, x_1, \dots, x_n, \frac{1}{2}x_1^2, x_1x_2, x_1x_3, \dots, x_{n-1}x_n, \frac{1}{2}x_n^2 \right\rbrace.
\label{eq:MonomialBasis2}
\end{equation}

\subsection{$\Lambda$-poisedness}
\label{sec:LambdaPoisedDef}
Many \gls{DFO} algorithms are model based. Thus, in order to guarantee global convergence, we have to ensure that the model is \textit{good enough}. In gradient based methods, typically some Taylor expansion error bounds are considered. In \gls{DFO} methods, which are based on interpolation or regression, the quality of the model depends on the quality of the training data set. This leads us to the introduction of the term \textit{$\Lambda$-poisedness}. We recall the definitions of $\Lambda$-poisedness from~\cite[Def. 3.6, 4.7, 5.3]{Conn_2009book}.
\begin{definition}[$\Lambda$-poisedness in the interpolation sense]\hfill \\
	\label{def:LambdaPoisedInterpol}
	Given a poised interpolation problem as defined in sec.~\ref{sec:Notation}. Let $\mathcal{B}\subset \RR^n$ and $\Lambda>0$. Then the training data set $\train$ is $\Lambda$-poised in $\mathcal{B}$ (in the interpolation sense) if and only if
	\begin{equation}
	\forall \vm x \in \mathcal{B} \ \exists \vm{l}(\vm x) \in \RR^{p_1} \text{ s.t. } \ \
	\sum_{i=0}^{p} l^i(\vm x) \vms \Phi(\vm y^i) = \vms \Phi(\vm x) \ \ \text{ with } \ \ ||\vm l(\vm x)||_{\infty} \leq \Lambda. 
	\label{eq:LambdaPoisedInterpol}
	\end{equation}
\end{definition}
\begin{remark}\label{remark:LagrangePolynomialsInterpol}
	Note that in case of using the monomial basis, the equality in Def.~\ref{def:LambdaPoisedInterpol} can be rewritten as $\vm M^{\transpose}\vm l(\vm x) = \vms \Phi(\vm x)$ and the $l^i(\vm x)$, $i=0,\dots,p$ are uniquely defined by the Lagrange polynomials and can be obtained by solving
	\begin{equation}
	\vm M \vms \lambda^i = \vm e^{i+1},  
	\label{eq:LagrangePolynomialInterpol}
	\end{equation}
	where $\vm e^i\in \RR^{p_1}$ denotes the $i$-th unit vector and the elements of $\vms \lambda_{i}$ are the coefficients of the polynomial $l^i$, which will be evaluated at $\vm x$.
\end{remark}
\begin{definition}[$\Lambda$-poisedness in the regression sense]\hfill \\
	\label{def:LambdaPoisedRegress}
	Given a poised regression problem as defined in sec.~\ref{sec:Notation}. Let $\mathcal{B}\subset \RR^n$ and $\Lambda>0$. Then the training data set $\train$ is $\Lambda$-poised in $\mathcal{B}$ (in the regression sense) if and only if
	\begin{equation}
	\forall \vm x \in \mathcal{B} \ \exists \vm{l}(\vm x) \in \RR^{p_1} \text{ s.t. } \ \
	\sum_{i=0}^{p} l^i(\vm x) \vms \Phi(\vm y^i) = \vms \Phi(\vm x) \ \text{ with } \ ||\vm l(\vm x)||_{\infty} \leq \Lambda. 
	\label{eq:LambdaPoisedRegress}
	\end{equation}
\end{definition}
\begin{remark}\label{remark:LagrangePolynomialsRegress}
	Note that the $l^i(\vm x)$, $i=0,\dots,p$ are not uniquely defined since the system in~\eqref{eq:LambdaPoisedRegress} is underdetermined. However, the minimum norm solution corresponds to the Lagrange polynomials (in the regression sense), cf.~\cite[Def. 4.4]{Conn_2009book}. Analogously to remark~\ref{remark:LagrangePolynomialsInterpol} they can be computed by solving
	\begin{equation}
	\vm M \vms \lambda^i \stackrel{\text{l.s.}}{=} \vm e^{i+1} 
	\label{eq:LagrangePolynomialRegress}
	\end{equation}
	and using the entries of $\vms \lambda^i$ as coefficients for the polynomial $l^i$.
\end{remark}

In the next section we will see that BOBYQA is based on the minimum norm solution of underdetermined interpolation problems. Thus, we recall the definition of $\Lambda$-poisedness in the minimum-norm sense.
\begin{definition}[$\Lambda$-poisedness in the minimum-norm sense]\hfill \\
	\label{def:LambdaPoisedMinNorm}
	Given a poised underdetermined interpolation problem as defined in sec.~\ref{sec:Notation}, but with $p<q$. Let $\mathcal{B}\subset \RR^n$, $\Lambda>0$ and $\Phi$ the monomial basis. Then the training data set $\train$ is $\Lambda$-poised in $\mathcal{B}$ (in the minimum-norm sense) if and only if
	\begin{equation}
	\forall \vm x \in \mathcal{B} \ \exists \vm{l}(\vm x) \in \RR^{p_1} \text{ s.t. } \ \
	\sum_{i=0}^{p} l^i(\vm x) \vms \Phi(\vm y^i) \stackrel{\text{\normalfont{l.s.}}}{=} \vms \Phi(\vm x) \ \text{ with } \ ||\vm l(\vm x)||_{\infty} \leq \Lambda. 
	\label{eq:LambdaPoisedMinNorm}
	\end{equation}
\end{definition}
\begin{remark}
	Note that the $l^i(\vm x)$, $i=0,\dots,p$ exist and are uniquely defined if the system matrix $\vm M$ has full column rank and correspond to the Lagrange polynomials (in the minimum norm sense), cf.~\cite[Def. 5.1]{Conn_2009book}. Analogously to remark~\ref{remark:LagrangePolynomialsInterpol} and~\ref{remark:LagrangePolynomialsRegress} they can be computed by the minimum norm solution of
	\begin{equation}
	\vm M \vms \lambda^i \stackrel{\text{m.n.}}{=} \vm e^{i+1} 
	\label{eq:LagrangePolynomialMinNorm}
	\end{equation}
	and using the entries of $\vms \lambda^i$ as coefficients for the polynomial $l^i$.
\end{remark}
It can be shown, that the poisedness constant $\Lambda$, or rather $1/\Lambda$ can be interpreted as the distance to singularity of the matrix $\vm M$ or to linear dependency of the vectors $\vms \Phi(\vm y^i)$, $i=0,\dots,p$~\cite[Chap. 3]{Conn_2009book}.

\subsection{BOBYQA}
\label{sec:BOBYQA}
The following description of BOBYQA's main ideas follows the one in~\cite{Cartis_2021aa}.
The BOBYQA algorithm is a trust region method based on a (typically underdetermined) quadratic interpolation model. The training data set defined in~\eqref{eq:Notation_TrainingDataSet} contains the objective function evaluations for each sample point $\vm y^i$, $i=0,\dots,p$, and $\abs{\train}\in\left[n+2,(n+1)(n+2)/2\right]$. Note in PyBOBYQA setting $\abs{\train}=n+1$ is possible, but then a fully linear interpolation model is applied.

Let $\vm x^{(k)}$ denote the solution at the $k$-th iteration. At each iteration a local quadratic model is built, i.e.,
\begin{equation}
f(\vm x) \approx m^{(k)}(\vm x) = c^{(k)} + {\vm g^{(k)}}^{\transpose}(\vm x - \vm x^{(k)}) +\frac{1}{2}(\vm x - \vm x^{(k)})^{\transpose}\vm H^{(k)} (\vm x - \vm x^{(k)}),
\label{eq:BOBYQA_QP}
\end{equation}
fulfilling the interpolation conditions
\begin{equation}
f(\vm y^j) = m^{(k)}(\vm y^j) \ \ \forall \vm y^j \in \train.
\label{eq:BOBYQA_InterpolCond}
\end{equation}
For $|\train|=(n+1)(n+2)/2$ the interpolation problem is fully determined. For $|\train|<(n+1)(n+2)/2$ the remaining degrees of freedom are set by minimizing the distance between the current and the last approximation of the Hessian $\vm H$ in the matrix Frobenius norm, i.e., 
\begin{equation}
\min_{c^{(k)},\vm g^{(k)}, \vm H^{(k)}} ||\vm H^{(k)}-\vm H^{(k-1)}||_{\text{F}}^2 \ \ \text{ s.t. \eqref{eq:BOBYQA_InterpolCond} holds},
\label{eq:BOBYQA_MinFrobNorm}
\end{equation}
where typically $\vm H^{(-1)} = \vm 0^{n\times n}$.
Once the quadratic model is built, the trust region subproblem 
\begin{align}
&\min_{\vm x \in \RR^n} m^{(k)}(\vm x)  \label{eq:BOBYQA_TRsubproblem}\\
&\text{ s.t. } ||\vm x - \vm x^{(k)}||_2\leq \Delta^{(k)} \notag
\end{align}
is solved, where $\Delta^{(k)}> 0 $ denotes the trust region radius. Then, having the optimal solution $\xopt$ of~\eqref{eq:BOBYQA_TRsubproblem} in the $k$-th iteration calculated, we check if the decrease in the objective function is sufficient. Therefore, the ratio between actual decrease and expected decrease
\begin{equation}
r^{(k)}=\frac{\text{actual decrease}}{\text{expected decrease}} = \frac{f(\vm x^{(k)})- f(\xopt)}{m^{(k)}(\vm x^{(k)})-m^{(k)}(\xopt)}
\label{eq:BOBYQA_DecreaseRatio}
\end{equation}
is calculated. If the ratio $r^{(k)}$ is sufficiently large, the step is accepted ($\vm x^{(k+1)}=\xopt$) and the trust region radius increased ($\Delta^{(k+1)}>\Delta^{(k)}$). Otherwise, the step is rejected ($\vm x^{(k+1)}=\vm x^{(k)}$) and the trust region radius decreased ($\Delta^{(k+1)}<\Delta^{(k)}$).

An important question is now, how to maintain the training data set. An accepted solution is added to $\train$,
i.e., $\train = \train \cup \lbrace (\vm y^{\text{add}},f(\vm y^{\text{add}})) \rbrace$ with $\vm y^{\text{add}} = \xopt$, 
and since $|\train|$ is fixed, another data point has to leave the training data set. Hereby, the aim is to achieve the best possible quality of the model. We know from sec.~\ref{sec:LambdaPoisedDef} that the quality of the model depends on the training data set and can be expressed by the poisedness constant $\Lambda$. Thus, the decision which point is replaced depends on its impact on the $\Lambda$-poisedness. 
Let 
$l^i$, $i=0,\dots,p$
be the Lagrange polynomials obtained by evaluating~\eqref{eq:LagrangePolynomialInterpol} or~\eqref{eq:LagrangePolynomialMinNorm} and 
%
%
\begin{equation}
i^{\text{go}}=\argmax_{i=0,\dots,p} l^i(\vm y^{\text{add}}).
\label{eq:BOBYQA_LeavingPoint}
\end{equation}
Then, the point $\vm y^{i^{\text{go}}}$ is replaced by the new iterate $\vm y^{\text{add}}$. This means, that the point with the worst (largest) value of the corresponding Lagrange polynomial, evaluated at the new iterate solution, is going to be replaced, i.e., the updated training data set is built by $\train = \train \cup \lbrace(\vm y^{\text{add}},f(\vm y^{\text{add}}))\rbrace \backslash \lbrace(\vm y^{i^{\text{go}}},f(\vm y^{i^{\text{go}}}))\rbrace$.

Regardless of the subproblem's optimal solution, sometimes points are exchanged, only in order to improve the training data set. Let $\vm y^i$ be the training data point which shall be replaced, e.g. the point furthest from the current optimal solution. Then we consider the corresponding Lagrange polynomial $l^i$ and choose a new point by solving
\begin{equation}
\vm y^{\text{new}}=\min_{\vm y \in \mathcal{B}} l^{i}(\vm y).
\label{eq:BOBYQA_NewPoint}
\end{equation}
For more details we refer to the original work by Powell~\cite{Powell_2009aa}.

\subsubsection{Solving the linear system}

Later we will modify the linear systems which are solved to determine $c^{(k)}$, $\vm g^{(k)}$ and $\vm H^{(k)}$ in~\eqref{eq:BOBYQA_QP}. To make it easier to follow, we recall the systems from~\cite{Cartis_2019aa}. We denote the current optimal solution by $\xopt\in \train$. 
Without limitation of generality we assume that $\xopt = \vm y^p$.
If we have a uniquely solvable interpolation problem (i.e. $p_1=q_1=(n+1)(n+2)/2$), the linear system to be solved reads as follows
\begin{equation}
\vm M_{\text{I}} \vm v^{(k)} = \vm b_{\text{I}}
\label{eq:BOBYQA_LGSfullydet}
\end{equation}
with
\begin{align}
&\vm M_{\text{I}} =
\left(\begin{matrix}
\phi_1(\vm y^0 - \xopt) & \dots & \phi_q(\vm y^0- \xopt)  \\
\vdots & & \vdots\\
\phi_1(\vm y^{p-1}- \xopt) & \dots & \phi_q(\vm y^{p-1}- \xopt) 
\end{matrix}\right) \in \RR^{p\times q}, 
\label{eq:BOBYQA_LGSfullydet_M}\\
&\vm v^{(k)} = 
\left(\begin{matrix}
\vm g^{(k)}\\
{\vm H^{(k)}}^{\star}
\end{matrix} \right) \in \RR^q \ \ \text{and} \ \ 
\vm b_{\text{full}} = 
\left(\begin{matrix}
f(\vm y^0) - f(\xopt) \\
\vdots\\
f(\vm y^{p-1}) - f(\xopt)
\end{matrix} \right) \in \RR^p,
\label{eq:BOBYQA_LGSfullydet_v_b}
\end{align}
where ${\vm H^{(k)}}^{\star}$ is a vector in $\RR^{(n^2+n)/2}$ containing the lower triangular and the diagonal elements of the diagonal matrix $\vm H^{(k)}$. The constant part is set to $c^{(k)} = f(\xopt)$.

In the case $n+2\leq p_1 <(n+1)(n+2)/2$ we need further information from~\eqref{eq:BOBYQA_MinFrobNorm}. The linear system we solve is formulated by
\begin{equation}
\MB \vm v^{(k)} = \vm \bB,
\label{eq:BOBYQA_LGSmfn}
\end{equation}
where
\begin{equation}
\MB
=\left(\begin{matrix}
\vm A & \vm B^{\transpose} \\
\vm B & \vm 0
\end{matrix}\right) \in \RR^{p+n\times p+n}, 
\label{eq:BOBYQA_LGSmfn_M}
\end{equation}
with $\vm A \in \RR^{p\times p}$ given by
\begin{equation}
\vm A
=\frac{1}{2} \left(\begin{matrix}
\left((\vm y^0-\xopt)^{\transpose}(\vm y^0-\xopt)\right)^2 &
\dots & \left((\vm y^0-\xopt)^{\transpose}(\vm y^{p-1}-\xopt)\right)^2 \\
\vdots & & \vdots \\
\left((\vm y^{p-1}-\xopt)^{\transpose}(\vm y^0-\xopt)\right)^2 &
\dots & \left((\vm y^{p-1}-\xopt)^{\transpose}(\vm y^{p-1}-\xopt)\right)^2 
\end{matrix}\right) 
\label{eq:BOBYQA_LGSmfn_Mquad}
\end{equation}
and
\begin{equation}
\vm B
=\left(\begin{matrix}
\vert & & \vert \\
\vm y^0-\xopt & \dots & \vm y^{p-1}-\xopt \\
\vert & & \vert 
\end{matrix}\right) \in \RR^{n\times p}.
\label{eq:BOBYQA_LGSmfn_Mlin}
\end{equation}
Further,
\begin{equation}
\vm v^{(k)} = 
\left(\begin{matrix}
v^{(k)}_{1}\\
\vdots\\
v^{(k)}_{p+n}
\end{matrix} \right) \in \RR^{p+n}, \ \ \label{eq:BOBYQA_LGSmfn_v}
\end{equation}
and
\begin{equation}
\bB = 
\left(\begin{matrix}
f(\vm y^0) - f(\xopt) \\
\vdots\\
f(\vm y^{p-1}) - f(\xopt)\\
\vm 0
\end{matrix} \right)
-\frac{1}{2}
\left(\begin{matrix}
\vm (\vm y^0-\xopt)^{\transpose}\vm H^{(k-1)} (\vm y^0-\xopt) \\
\vdots\\
\vm (\vm y^{p-1}-\xopt)^{\transpose}\vm H^{(k-1)} (\vm y^{p-1}-\xopt)\\
\vm 0
\end{matrix} \right)
\in \RR^{p+n}.
\label{eq:BOBYQA_LGSmfn_b}
\end{equation}
Then, after solving the interpolation problem~\eqref{eq:BOBYQA_LGSmfn}, the coefficients of~\eqref{eq:BOBYQA_QP} are computed by
\begin{align}
&c^{(k)} = f(\xopt), \ \ {\vm g^{(k)}}^{\transpose} = (v^{(k)}_{p+1}, \dots, v^{(k)}_{p+n})
\label{eq:BOBYQA_LGSmfn_c_g}\\
&\vm H^{(k)} = \vm H^{(k-1)} + \sum_{i=0}^{p-1} v^{(k)}_{i+1} \left((\vm y^i - \xopt)(\vm y^i - \xopt)^{\transpose}\right).
\label{eq:BOBYQA_LGSmfn_H}
\end{align}

\subsubsection{Convergence}\label{sec:BOBYQAConvergence}

The convergence theory for gradient based optimization algorithms like \gls{SQP} is typically based on error bounds of the Taylor expansion, in order to show the decreasing error between the model $m^{(k)}(\vm x)$ and the function $f(\vm x)$ and its derivatives. 
In \gls{DFO} the poisedness constant $\Lambda$ can be used to formulate a Taylor type error bound. In \cite{Conn_2008part1} an error bound is given by
\begin{eqnarray}
||\nabla m^{(k)}(\vm x) - \nabla f(\vm x)|| \leq \frac{1}{(d+1)!}G \Lambda \sum_{i=0}^{p}||\vm y^i - \vm x||^{d+1},
\label{eq:TaylorTypeErrorBound}
\end{eqnarray}
where $G$ is a constant depending only on the function $f$ and $d$ is the polynomial degree of the approximation model (i.e. here $d=2$). Thus, in order to apply the convergence theory of gradient based methods to \gls{DFO} methods, it is required to keep $\Lambda$ uniformly bounded for all training data sets used within the algorithm.

The algorithm in~\cite[Algo. 11.2]{Conn_2009book} is a modified version of the original BOBYQA such that global convergence can be proven. One key aspect is the fact, that the interpolation set is $\Lambda$-poised in each step -- or, within a final number of steps, it can be transformed into a $\Lambda$-poised set (using the so-called model improvement algorithm~\cite[Algo. 6.3]{Conn_2009book}). 


The original implementation by Powell~\cite{Powell_2009aa} and the modified Python implementation by Cartis et al.~\cite{Cartis_2019aa} cannot guarantee this. They apply strategies to update the training data set, which \textit{hopefully} reduce the poisedness constant $\Lambda$, but they do not provide bounds~\cite{Conn_2008part1}. 
They would require $\Lambda$-poisedness checks more often and re-evaluations of the whole training data set in situations of poorly balanced training data sets, i.e., an usage of the model improvement algorithm. But, for the benefit of less computational effort and thus, efficiency, they abdicate on provable convergence and rely on heuristics, when and how to check and improve $\Lambda$-poisedness.

\section{Hermite-type modifications for BOBYQA}
\label{sec:firststep}
As mentioned in sec.~\ref{sec:setting} we assume that we know some partial derivatives of the objective function $f:\RR^n\rightarrow\RR$, i.e., we can calculate them with negligible computational effort compared to the effort of evaluating the objective function itself. We focus on the information in~\eqref{eq:Setting_SetDerivs}, i.e., we neglect the second order derivatives.
Our \textit{Hermite} training data set is then given by
\begin{align}
\trainH = &\left\lbrace\left(\vm y^0,f(\vm y^0),\frac{\partial}{\partial y_1}f(\vm y^0), \dots, \frac{\partial}{\partial y_{\nd}}f(\vm y^0)\right),\dots \right. \label{eq:Mixed_TrainingDataSet} \\
&\left. \dots, \left(\vm y^p,f(\vm y^p),\frac{\partial}{\partial y_1}f(\vm y^p), \dots, \frac{\partial}{\partial y_{\nd}}f(\vm y^p)\right)\right\rbrace. \notag
\end{align}
We want to use this additional information in order to improve the quadratic model. 
In the following we will propose two different strategies:
\begin{enumerate}
	\item \textit{Hermite least squares} extends the simple interpolation~\eqref{eq:BOBYQA_LGSfullydet} with derivative information yielding least squares regression. First we introduce this approach starting with a uniquely solvable interpolation problem, i.e., the number of training data points $p_1$ coincides with the dimension of the basis $q_1$. Then, we allow to reduce the number of training data points such that the initial interpolation problem is underdetermined, i.e., $p_1<q_1$. While for $p_1=q_1$ global convergence can be ensured, the method yields superior performance for $p_1<q_1$.
	\item \textit{Hermite BOBYQA} extends BOBYQA's underdetermined interpolation~\eqref{eq:BOBYQA_LGSmfn} with derivative information yielding least squares regression.
\end{enumerate}

\subsection{Hermite least squares}
In Hermite interpolation a linear system is solved in order to find a polynomial approximation of a function, considering function values and partial derivative values in given training data points, cf.~\cite[Chap. 6.6]{Hermann_HermiteInterpol} or~\cite{Sauer_1995aa}. In the following we will build such a system, but with more information than required for a uniquely solvable Hermite interpolation and solve it with least squares regression. Thus, we call the optimization approach based on this technique Hermite least squares.

\subsubsection{Build upon interpolation ($p_1 = q_1$)}
First, we consider a training data set with $\abs{\trainI}=p_1 \equiv q_1$, i.e., we could solve an interpolation problem as in~\eqref{eq:BOBYQA_LGSfullydet} based only on function evaluations in order to obtain the quadratic model in~\eqref{eq:BOBYQA_QP}.
Instead, we provide additionally derivative information for the first $\kd$ partial derivatives of each training data point, i.e., we consider $\trainH$ with $\abs{\trainH} = p_1(1+\kd)$, where $p_1$ is the number of data points and $\abs{\trainH}$ denotes the \textit{number of information}. We extend the system with the gradient information available in form of additional lines for the system matrix and the right hand side and obtain
\begin{equation}
\Msr =
\left(\begin{matrix}
\phi_1(\vm y^0 - \vm x^{\text{opt}}) & \dots & \phi_q(\vm y^0- \vm x^{\text{opt}})  \\
\vdots & & \vdots\\
\phi_1(\vm y^{p-1}- \vm x^{\text{opt}}) & \dots & \phi_q(\vm y^{p-1}- \vm x^{\text{opt}}) \\
\frac{\partial}{\partial y_1}\phi_1(\vm y^0 - \vm x^{\text{opt}}) & \dots &
\frac{\partial}{\partial y_1}\phi_q(\vm y^0 - \vm x^{\text{opt}}) \\
\vdots & & \vdots\\
\frac{\partial}{\partial y_{\nd}}\phi_1(\vm y^{0} - \vm x^{\text{opt}}) & \dots &
\frac{\partial}{\partial y_{\nd}}\phi_q(\vm y^{0} - \vm x^{\text{opt}}) \\
\frac{\partial}{\partial y_1}\phi_1(\vm y^1 - \vm x^{\text{opt}}) & \dots &
\frac{\partial}{\partial y_1}\phi_q(\vm y^1 - \vm x^{\text{opt}}) \\
\vdots & & \vdots\\
\frac{\partial}{\partial y_{\nd}}\phi_1(\vm y^{p} - \vm x^{\text{opt}}) & \dots &
\frac{\partial}{\partial y_{\nd}}\phi_q(\vm y^{p} - \vm x^{\text{opt}}) 
\end{matrix}\right) \in \RR^{p_1(1+\nd)-1\times q}
\label{eq:Mixed1_Mregress}
\end{equation}
and
\begin{equation}
\bsr =
\left(\begin{matrix}
f(\vm y^0) - f(\vm x^{\text{opt}}) \\
\vdots \\ f(\vm y^{p-1}) - f(\vm x^{\text{opt}})
\\ \frac{\partial}{\partial y_1} f(\vm y^0) \\ \vdots
\\ \frac{\partial}{\partial y_{\nd}} f(\vm y^{0}) 
\\ \frac{\partial}{\partial y_1} f(\vm y^1) \\ \vdots
\\ \frac{\partial}{\partial y_{\nd}} f(\vm y^{p})
\end{matrix}\right) \in \RR^{p_1(1+\nd)-1}.
\label{eq:Mixed1_bregress}\\
\end{equation}
Solving the overdetermined linear system
\begin{equation}
\Msr \vsr \stackrel{\text{l.s.}}{=} \bsr
\label{eq:Mixed1_LGS}
\end{equation}
using least squares regression yields a quadratic model for the trust region subproblem ($\vsr$ defined as in~\eqref{eq:BOBYQA_LGSfullydet_v_b}). The formulation of the system matrix $\Msr$ and the right hand side $\bsr$ in case of second order derivatives is given in Appendix~\ref{sec:Hls_2ndDerivs}.
The question is: how good is this model. Therefore, we state the following theorem, which generalizes theorem 4.1 in~\cite{Conn_2008part2}.
\begin{theorem}
	\label{theo:LambdaPoisedInterpolRegress}
	Given a poised training data set $\train_{\text{\normalfont{I}}}$ and the monomial basis $\Phi$ with $\abs{\train_{\text{\normalfont{I}}}}=\abs{\Phi}$, and $\mathcal{B}\subset{\RR^n}$. Let $\vm M_{\text{\normalfont{I}}}$ be the corresponding system matrix of the interpolation problem and $\vm b_{\text{\normalfont{I}}}$ the right hand side, respectively. 
	Let $\train_{\text{\normalfont{R}}}\supset\train_{\text{\normalfont{I}}}$ be a training set containing further information, such that the corresponding system matrix $\vm M_{\text{\normalfont{R}}}$ has still full column rank.
	If $\train_{\text{\normalfont{I}}}$ is $\Lambda$-poised in $\mathcal{B}$ in the interpolation sense, then $\train_{\text{\normalfont{R}}}$ is at least $\Lambda$-poised in $\mathcal{B}$ in the regression sense.
\end{theorem}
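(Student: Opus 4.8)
The plan is to lean on the \emph{existence} form of $\Lambda$-poisedness stated in Def.~\ref{def:LambdaPoisedRegress}, combined with the inclusion $\train_{\text{\normalfont{I}}}\subset\train_{\text{\normalfont{R}}}$, which guarantees that every row of the interpolation matrix $\vm M_{\text{\normalfont{I}}}$ reappears as a row of the regression matrix $\vm M_{\text{\normalfont{R}}}$. The crucial observation is that $\Lambda$-poisedness only requires the existence of \emph{some} coefficient vector $\vm l(\vm x)$ of bounded $\infty$-norm reproducing $\vms\Phi(\vm x)$; it does not oblige us to use the minimum-norm (Lagrange) representation of Remark~\ref{remark:LagrangePolynomialsRegress}. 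This freedom lets us recycle the interpolation coefficients directly.

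Concretely, I would fix an arbitrary $\vm x\in\mathcal{B}$ and apply the hypothesis through Def.~\ref{def:LambdaPoisedInterpol} and Remark~\ref{remark:LagrangePolynomialsInterpol}: there is a unique $\vm l(\vm x)\in\RR^{p_1}$ with $\vm M_{\text{\normalfont{I}}}^{\transpose}\vm l(\vm x)=\vms\Phi(\vm x)$ and $||\vm l(\vm x)||_{\infty}\le\Lambda$. I would then order the rows of $\vm M_{\text{\normalfont{R}}}$ so that its leading block is exactly $\vm M_{\text{\normalfont{I}}}$ while the remaining rows collect the appended derivative information $\vm M_{\text{add}}$, giving the block form $\vm M_{\text{\normalfont{R}}}^{\transpose}=(\vm M_{\text{\normalfont{I}}}^{\transpose}\mid\vm M_{\text{add}}^{\transpose})$. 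Reading the regression condition of Def.~\ref{def:LambdaPoisedRegress} through this system matrix (the natural generalization to Hermite rows, matching the Lagrange computation in Remark~\ref{remark:LagrangePolynomialsRegress}), the padded vector $\tilde{\vm l}(\vm x)=(\vm l(\vm x)^{\transpose},\vm 0^{\transpose})^{\transpose}$ becomes the candidate certificate for $\train_{\text{\normalfont{R}}}$.

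Two short verifications finish the argument. First, the reproduction constraint survives the padding, since $\vm M_{\text{\normalfont{R}}}^{\transpose}\tilde{\vm l}(\vm x)=\vm M_{\text{\normalfont{I}}}^{\transpose}\vm l(\vm x)+\vm M_{\text{add}}^{\transpose}\vm 0=\vms\Phi(\vm x)$, so the derivative rows are simply annihilated. Second, appending zeros cannot raise the $\infty$-norm, hence $||\tilde{\vm l}(\vm x)||_{\infty}=||\vm l(\vm x)||_{\infty}\le\Lambda$. As $\vm x\in\mathcal{B}$ was arbitrary, Def.~\ref{def:LambdaPoisedRegress} holds for $\train_{\text{\normalfont{R}}}$ with the same constant $\Lambda$; since the regression poisedness constant is the smallest admissible bound, $\train_{\text{\normalfont{R}}}$ is \emph{at least} $\Lambda$-poised. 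In passing, the full-column-rank hypothesis on $\vm M_{\text{\normalfont{R}}}$ is in fact automatic, because $\vm M_{\text{\normalfont{R}}}$ already contains the nonsingular block $\vm M_{\text{\normalfont{I}}}$.

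The step I expect to demand the most care is the conceptual one at the start: being explicit that we invoke the existence characterization of Def.~\ref{def:LambdaPoisedRegress} and not the minimum-$2$-norm Lagrange coefficients of Remark~\ref{remark:LagrangePolynomialsRegress}. If one insisted on bounding the latter, the monotonicity exploited here could fail, because enlarging the data set alters that specific solution in a way that need not decrease its $\infty$-norm. Once this is pinned down, the block decomposition and the two verifications are entirely routine, the derivative rows entering only through $\vm M_{\text{add}}$, which the zero-padding removes.
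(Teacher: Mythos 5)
Your proposal is correct and follows essentially the same route as the paper: decompose $\vm M_{\text{\normalfont{R}}}$ into the interpolation block $\vm M_{\text{\normalfont{I}}}$ plus the appended rows, zero-pad the interpolation certificate $\vm l(\vm x)$, and check that the reproduction identity and the $\infty$-norm bound $\Lambda$ are preserved. Your side observations (that the existence form of Def.~\ref{def:LambdaPoisedRegress} is the one being invoked, and that full column rank of $\vm M_{\text{\normalfont{R}}}$ is automatic since it contains the nonsingular block $\vm M_{\text{\normalfont{I}}}$) are accurate but not needed beyond what the paper's proof already does.
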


\begin{proof}
	The additional information can be added in form of additional lines to the system matrix and the right hand side. Thus, we can set the system matrix and the right hand side of the regression problem corresponding to $\train_{\text{\normalfont{R}}}$ to
	\begin{equation}
	\vm M_{\text{R}} = 
	\left(\begin{matrix}
	\vm M_{\text{I}} \\
	\vm M_{\text{add}}
	\end{matrix}\right) \ \text{ and } \
	\vm b_{\text{R}} = 
	\left(\begin{matrix}
	\vm b_{\text{I}} \\
	\vm b_{\text{add}}
	\end{matrix}\right).
	\end{equation}
	Since $\train_{\text{\normalfont{I}}}$ is $\Lambda$-poised in the interpolation sense, by Def.~\ref{def:LambdaPoisedInterpol} holds
	\begin{equation}
	\forall \vm x \in \mathcal{B} \ \exists \vm{l}_{\text{I}}(\vm x) \in \RR^{\abs{\train_{\text{\normalfont{I}}}}} \text{ s.t. } \ \
	\sum_{\vm y^i \in \train_{\text{I}}} l_{\text{I}}^i(\vm x) 
	\vm m_{\text{I}}^i 
	= \vms \Phi(\vm x) \ \ \text{ with } \ \ \|\vm l_{\text{I}}(\vm x)\|_{\infty} \leq \Lambda, 
	\label{eq:Mixed1_LambdaPoisedInterpol}
	\end{equation}
	where $\vm m_{\text{I}}^i $ is the $i$-th column of $\vm M_{\text{I}}^{\transpose}$.
	We define $\vm l_{\text{R}}(\vm x) = (\vm l_{\text{I}}(\vm x), \vm 0)^{\transpose} \in \RR^{\abs{\train_{\text{\normalfont{R}}}}}$. Then 
	\begin{equation}
	\sum_{\vm y^i \in \train_{\text{R}}} l_{\text{R}}^i(\vm x) 
	\vm m_{\text{R}}^i 
	= \vms \Phi(\vm x) 
	\end{equation}
	holds and $\|\vm l_{\text{R}}(\vm x)\|_{\infty}$ is bounded by $\Lambda$, since
	\begin{equation}
	\|\vm \l_{\text{R}}(\vm x)\|_{\infty}
	= \max_{i=0,\dots,\abs{\train_{\text{R}}}} \abs{l_{\text{R}}^i(\vm x)}
	\stackrel{\text{Def. }\vm l_{\text{R}}}{=}
	\max_{i=0,\dots,\abs{\train_{\text{I}}}} \abs{l_{\text{I}}^i(\vm x)}
	= \|\vm l_{\text{I}}(\vm x)\|_{\infty}
	\stackrel{\eqref{eq:Mixed1_LambdaPoisedInterpol}}{\leq}\Lambda.
	\label{eq:Mixed_Proof_lambdaBounded}
	\end{equation}
\end{proof}

Theorem~\ref{theo:LambdaPoisedInterpolRegress} shows, that it is enough to ensure that a subset of the regression data set is $\Lambda$-poised in the interpolation sense, and then we can deduce that the regression data set is at least $\Lambda$-poised in the regression sense. Thus, 
although there is no analogue of the model improvement algorithm for the regression case~\cite[Chap. 6]{Conn_2009book},
we can apply the model improvement algorithm 6.3 for interpolation and the optimization algorithm 11.2 from~\cite{Conn_2009book}, ensure $\Lambda$-poisedness in the interpolation sense of a subset with $\abs{\Phi}=(n+1)(n+2)/2$ points, but build the quadratic model with least squares regression.
And since $l_{\text{R}}^i(\vm x)=0$ for $i>\abs{\train_{\text{I}}}$, the type of additional information in $\train_{\text{R}}$ has no impact on the proof -- as long as the matrix $\vm M_{\text{R}}$ has full column rank. This implies, that instead of additional data points and their function evaluations, we can also add derivative information according to~\eqref{eq:Mixed1_Mregress}, and our training data set remains at least $\Lambda$-poised.
The proof of convergence from~\cite{Conn_2009book} remains unaffected. In practice we expect faster convergence due to better quadratic models.

\subsubsection{Build upon underdetermined interpolation  ($p_1 < q_1$)}
For quadratic interpolation, a large set of training data points is required $p_1=\abs{\train_{\text{I}}}=(n+1)(n+2)/2$, such that the linear system is uniquely solvable. Since in Hermite least squares we have additional gradient information we can reduce the number of training data points and still have a determined or overdetermined regression system. The number of rows in the Hermite least squares system is given by $p_1(1+\kd)-1$ and has to be larger than the number of columns, i.e., $q=\abs{\Phi}-1=(n+1)(n+2)/2-1$, cf.~\eqref{eq:Mixed1_Mregress}. Thus, the required number of training data points in the Hermite least squares is only 
\begin{equation}
p_1 \geq \left\lceil \frac{(n+1)(n+2)}{2(1+\kd)} \right\rceil.
\label{eq:Mixed_Hermitian_NumberTrainPoints}
\end{equation}
The Hermite least squares system can be built as in~(\ref{eq:Mixed1_Mregress}--\ref{eq:Mixed1_LGS}), with the only difference that $p_1<q_1$. Since the regression data set does not contain a subset of $\Lambda$-poised interpolation points anymore, the model improvement algorithm cannot be applied to a subset and thus, convergence cannot be proven. In the next section we will discuss how to build and maintain the training data set, taking into account the derivative information.

\subsubsection{$\Lambda$-poisedness for Hermite least squares}
We aim to include the derivative information into the updating procedure of the training data set. 
We start with the Hermite interpolation setting, i.e., we set 
\begin{equation}
p_1 = \frac{(n+1)(n+2)}{2(1+\kd)}
\label{eq:Mixed_HermitianInterpol_NumberTrainPoints}
\end{equation}
s.t. $\abs{\trainH}=q_1$ and \eqref{eq:Mixed1_LGS} is a uniquely solvable Hermite interpolation problem.
We adapt the definition of $\Lambda$-poisedness to the Hermite interpolation case.
However, the following definition does not guarantee the required error bounds for provable convergence (cf.~\cite[Chap. 6.1]{Conn_2009book}).
This leads to an approach, without formal proof such as the common BOBYQA implementations.
\begin{definition}[$\Lambda$-poisedness in the Hermite interpolation sense]\hfill \\
	\label{def:LambdaPoisedHermiteInterpol}
	Given a poised Hermite interpolation problem as defined above with $p_1$ training data points, the training data set $\trainH$ and the monomial basis $\Phi$ with $\abs{\trainH}=p_1(1+\kd)=q_1=\abs{\Phi}$. Let $\mathcal{B}\subset \RR^n$ and $\Lambda>0$. Then the training data set $\trainH$ is $\Lambda$-poised in $\mathcal{B}$ (in the Hermite interpolation sense) if and only if
	\begin{equation}
	\forall \vm x \in \mathcal{B} \ \exists \vm{l}(\vm x) \in \RR^{q_1} \text{ s.t. } \ \
	\MH^{\top} \vm l(\vm x) = \vms \Phi(\vm x) \ \ \text{ with } \ \ ||\vm l(\vm x)||_{\infty} \leq \Lambda. 
	\label{eq:LambdaPoisedHermiteInterpol}
	\end{equation}
\end{definition}
We will define Lagrange-type polynomials for Hermite interpolation and show that they solve~\eqref{eq:LambdaPoisedHermiteInterpol}.
\begin{definition}[Lagrange-type polynomial for Hermite interpolation]\hfill \\
	\label{def:LagrPolyHermiteInterpol}
	Let $\MH\in \RR^{q_1 \times q_1}$ be a Hermite interpolation matrix with respect to the basis $\Phi$ as defined in~\eqref{eq:Mixed1_Mregress} and $\vm e^i \in \RR^{q_1}$ the $i$-th unit vector. Let $\vms \lambda^i$ solve
	\begin{equation}
	\MH \vms \lambda^i = \vm e^{i+1}.
	\label{eq:Mixed1_LagrangeTypePolyHermiteInterpol_LGS}
	\end{equation}
	Then, the polynomial built with the coefficients of $\vms \lambda^i$ and the basis $\Phi$
	\begin{equation}
	t^i(\vm x) = \lambda_{0}^i \phi_0(\vm x) + \dots + \lambda_{q}^i \phi_{q}(\vm x),
	\label{eq:Mixed1_LagrangeTypePolyHermiteInterpol}
	\end{equation}
	defines the $i$-th Lagrange-type polynomial for Hermite interpolation.
\end{definition}
\begin{lemma}
	Let $\vm t(\vm x) = \left( t^0(\vm x),\dots,t^q(\vm x) \right)^{\top}$ be defined as in~\eqref{eq:Mixed1_LagrangeTypePolyHermiteInterpol}, $\Phi(\vm x)$ defined as in sec.~\ref{sec:Notation} and $\MH$ from~\eqref{eq:Mixed1_Mregress}. Then, $\vm t(\vm x)$ solves $\MH^{\top} \vm l(\vm x) = \vms \Phi(\vm x)$, i.e., $\vm t(\vm x) \equiv \vm l(\vm x)$.
\end{lemma}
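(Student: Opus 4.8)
The plan is to mimic the argument behind Remark~\ref{remark:LagrangePolynomialsInterpol}, which identifies the classical Lagrange polynomials with the rows of $\MH^{-1}$, and to verify that the same bookkeeping survives for the Hermite interpolation matrix. The starting observation is that in this regime $\MH \in \RR^{q_1 \times q_1}$ is square and, by the assumed poisedness of the Hermite interpolation problem, nonsingular; hence each system $\MH \vms \lambda^i = \vm e^{i+1}$ from Def.~\ref{def:LagrPolyHermiteInterpol} has a unique solution and $\vm t(\vm x)$ is well defined. That $\MH$ is square and invertible here follows from $\abs{\trainH} = p_1(1+\kd) = q_1 = \abs{\Phi}$ as fixed in Def.~\ref{def:LambdaPoisedHermiteInterpol}.

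First I would collect the $q_1$ defining systems into a single matrix identity. Writing $\vms \Lambda = (\vms \lambda^0, \dots, \vms \lambda^q) \in \RR^{q_1 \times q_1}$ for the matrix whose columns are the coefficient vectors, the equations $\MH \vms \lambda^i = \vm e^{i+1}$ for $i = 0, \dots, q$ read simultaneously as $\MH \vms \Lambda = (\vm e^1, \dots, \vm e^{q_1}) = \vm I$, so that $\vms \Lambda = \MH^{-1}$.

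Next I would rewrite $\vm t(\vm x)$ in terms of $\vms \Lambda$. Since each entry is $t^i(\vm x) = \lambda_0^i \phi_0(\vm x) + \dots + \lambda_q^i \phi_q(\vm x) = (\vms \lambda^i)^{\transpose} \vms \Phi(\vm x)$, stacking the rows gives $\vm t(\vm x) = \vms \Lambda^{\transpose} \vms \Phi(\vm x)$. Using $\vms \Lambda = \MH^{-1}$ together with $(\MH^{-1})^{\transpose} = (\MH^{\transpose})^{-1}$ yields $\vm t(\vm x) = (\MH^{\transpose})^{-1} \vms \Phi(\vm x)$, equivalently $\MH^{\transpose} \vm t(\vm x) = \vms \Phi(\vm x)$. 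This is exactly the system that $\vm l(\vm x)$ is required to satisfy in Def.~\ref{def:LambdaPoisedHermiteInterpol}, and since $\MH^{\transpose}$ is invertible the solution is unique, so $\vm t(\vm x) \equiv \vm l(\vm x)$.

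There is no genuine obstacle: the statement is the Hermite analogue of the elementary fact that the Lagrange polynomials are the rows of the inverse interpolation matrix. The only things to watch are the index shift between the $0$-indexed polynomials and basis functions and the $1$-indexed unit vectors $\vm e^{i+1}$, and the fact that the derivative rows of $\MH$ never spoil invertibility because poisedness is assumed from the outset. Accordingly, I would keep the write-up short and foreground the identity $\vms \Lambda = \MH^{-1}$ as the single decisive step.
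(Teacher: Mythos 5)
Your proposal is correct and follows essentially the same route as the paper: both collect the defining systems into the matrix identity $\MH \vm T = \vm I$ (your $\vms\Lambda = \MH^{-1}$), identify $\vm t(\vm x) = \vm T^{\transpose}\vms\Phi(\vm x)$, and transpose to obtain $\MH^{\transpose}\vm t(\vm x) = \vms\Phi(\vm x)$. Your explicit remarks on invertibility and uniqueness of $\vm l(\vm x)$ are harmless additions, not a different argument.
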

%
\begin{proof}
	We rewrite~\eqref{eq:Mixed1_LagrangeTypePolyHermiteInterpol_LGS} into
	\begin{equation}
	\MH \vm T = \vm I,
	\label{eq:Mixed1_LagrangeTypePolyHermiteInterpol_LGS_matrix}
	\end{equation}
	where $\vm I$ denotes the $q_1 \times q_1$ identity matrix and $\vm T$ is defined by the solution vectors of~\eqref{eq:Mixed1_LagrangeTypePolyHermiteInterpol_LGS}, i.e.,
	\begin{equation}
	\vm T = 
	\left(
	\begin{matrix}
	\vert & & \vert \\
	\vms \lambda^0 & \dots & \vms \lambda^q \\
	\vert & & \vert
	\end{matrix}
	\right).
	\label{eq:Mixed1_LagrangeTypePolyHermiteInterpol_T}
	\end{equation}
	Starting from~\eqref{eq:Mixed1_LagrangeTypePolyHermiteInterpol_LGS_matrix}, changing the order of the matrices, we derive
	\begin{align*}
	\vm T \MH &= \vm I. \\
	\intertext{Left multiplication by $\vms \Phi(\vm x)^{\transpose}$ yields}
	\vms \Phi(\vm x)^{\transpose} \vm T \MH &= \vms \Phi(\vm x)^{\transpose}.\\ 
	\intertext{We apply \eqref{eq:Mixed1_LagrangeTypePolyHermiteInterpol_T}}
	\left(\phi_0(\vm x), \dots, \phi_q(\vm x)\right) 
	\left( \begin{matrix}
	| & & | \\
	\vms \lambda^0 & \dots & \vms \lambda^q. \\
	| & & |
	\end{matrix} \right)
	\MH &= \left(\phi_0(\vm x), \dots, \phi_q(\vm x)\right) \\
	\intertext{and \eqref{eq:Mixed1_LagrangeTypePolyHermiteInterpol}}
	\left( t^0(\vm x),\dots,t^q(\vm x) \right) \MH &= \left(\phi_0(\vm x), \dots, \phi_q(\vm x)\right). \\
	\intertext{Transposing yields}
	\MH^{\transpose} 
	\left( \begin{matrix}
	t^0(\vm x) \\ \vdots \\ t^q(\vm x)
	\end{matrix} \right)
	&=
	\left( \begin{matrix}
	\phi_0(\vm x) \\ \vdots \\ \phi_q(\vm x)
	\end{matrix} \right), \\
	\intertext{which is per definition equivalent to}
	\MH^{\transpose} \vm t(\vm x) &= \vms \Phi(\vm x).
	\end{align*}
	Thus, $\vm t$ solves~\eqref{eq:LambdaPoisedHermiteInterpol}. 
\end{proof}
It holds that the (uniquely defined) polynomial solving the Hermite interpolation problem $\MH \vm v = \bH$ with $\MH\in \RR^{q_1 \times q_1}$ and $\bH \in \RR^{q_1}$ can be written as
\begin{equation}
m_{\text{H}}(\vm x) = 
\sum_{i=0}^{p} f(\vm y^i) t^i(\vm x) + \sum_{i=0}^{p} \sum_{j=1}^{\kd} 
\frac{\partial f}{\partial x_j}(\vm y^i) t^{jp_1-1+i}(\vm x),
\label{eq:LagrangeTypePolySum}
\end{equation}
where $p_1=p+1$ is the number of training data points and $q_1=(1+\kd)p_1$.

Let us investigate Hermite least squares, i.e., the case $\abs{\trainH}>q_1$. Extending the concept above, the Lagrange-type polynomials for Hermite least squares are obtained by solving
\begin{equation}
\MH \vms \lambda^i \stackrel{\text{l.s.}}{=} \vm e^{i+1}.
\label{eq:Mixed1_LagrangeTypePolyHermiteIRegress_LGS}
\end{equation}
instead of~\eqref{eq:Mixed1_LagrangeTypePolyHermiteInterpol_LGS}.
We maintain the training data set based on $\Lambda$-poisedness in the Hermite least squares sense. I.e., we maximize~\eqref{eq:BOBYQA_LeavingPoint} over the first $p_1$ Lagrange polynomials and replace the chosen data point with all corresponding information (i.e. function value and derivative information) by the new data point.

\subsection{Hermite BOBYQA}

The original BOBYQA method is characterized by the fact that it does not perform a fully determined interpolation, but uses $n+2\leq\abs{\train_{\text{B}}}<(n+1)(n+2)/2$ training data points. It guarantees existence and uniqueness by minimizing the difference between the current and the last Hessian matrix (in the Frobenius norm). Now we aim to improve this by using additional gradient information, i.e., we extend (\ref{eq:BOBYQA_LGSmfn}--\ref{eq:BOBYQA_LGSmfn_b}).

We take the coefficients from (\ref{eq:BOBYQA_LGSmfn_c_g}--\ref{eq:BOBYQA_LGSmfn_H}) and write the subscript B to emphasize the correspondence to the original BOBYQA method and obtain the original BOBYQA model by
\begin{equation}
m_{\text{B}}^{(k)}(\vm x) = c_{\text{B}}^{(k)} + {\vm g_{\text{B}}^{(k)}}^{\transpose}(\vm x - \vm x^{(k)}) 
+\frac{1}{2}(\vm x - \vm x^{(k)})^{\transpose}\vm H_{\text{B}}^{(k)} (\vm x - \vm x^{(k)}).
\label{eq:BOBYQA_QP_inModSec}
\end{equation}
We calculate the gradient of $m_{\text{B}}^{(k)}(\vm x)$ 
\begin{equation}
\nabla_{\vm x} m_{\text{B}}^{(k)}(\vm x) = \vm g_{\text{B}}^{(k)} + \vm H_{\text{B}}^{(k)} (\vm x - \vm x^{(k)})
\label{eq:BOBYQA_gradQP_inModSec}
\end{equation}
and replace $\vm H_{\text{B}}^{(k)}$ by~\eqref{eq:BOBYQA_LGSmfn_H}
\begin{equation}
\nabla_{\vm x} m_{\text{B}}^{(k)}(\vm x) 
= \vm g_{\text{B}}^{(k)} + \left(\vm H{^{(k-1)}} + \sum_{i=0}^{p-1} v_{\text{B},i+1}^{(k)} \left((\vm y^i - \xopt)(\vm y^i - \xopt)^{\transpose}\right)\right) (\vm x - \vm x^{(k)}).
\label{eq:BOBYQA_gradQP2_inModSec}
\end{equation}
Assume we know all partial derivatives, i.e., $\kd=n$. Then additionally to the interpolation conditions~\eqref{eq:BOBYQA_InterpolCond} we have
\begin{equation}
\nabla m^{(k)}(\vm y^j) = \nabla f(\vm y^j) \ \ \forall \vm y^j \in \trainH.
\label{eq:Mixed_InterpolCond}
\end{equation}
Inserting~\eqref{eq:BOBYQA_gradQP2_inModSec}, for each $\vm y^j \in \trainH$ we get
\begin{equation}
\sum_{i=0}^{p-1} v_{\text{B},i+1}^{(k)} 
\underbrace{\left((\vm y^i - \xopt)(\vm y^i - \xopt)^{\transpose}\right)}_{=:\vm C^i}
(\vm y^j - \vm x^{(k)}) + \vm g_{\text{B}}^{(k)}
= \nabla f(\vm y^j) - \vm H{^{(k-1)}}(\vm y^j - \vm x^{(k)}).
\label{eq:Mixed_InterpolCond2}
\end{equation}
Note that $\vm C^i \in \RR^{n\times n}$, so $\vm C^i(\vm y^j - \vm x^{(k)}) \in \RR^{n\times 1}$ and on both sides of~\eqref{eq:Mixed_InterpolCond2} we have vectors in $\RR^n$. 
For each $\vm y^j \in \trainH$ we formulate the linear system 
\begin{align}
&\underbrace{
	\left(\begin{matrix}
	| & & | & 1 & & 0\\
	\vm C^0(\vm y^j - \vm x^{(k)}) & \dots & \vm C^{p-1}(\vm y^j - \vm x^{(k)}) & & \ddots & \\
	| & & | & 0 & & 1
	\end{matrix}\right)
}_{=:\vm M_+^j}
\vm v^{(k)}
=\\
&\hspace{6cm} =\underbrace{
	\left(\begin{matrix}
	| \\ \nabla f(\vm y^j) - \vm H{^{(k-1)}}(\vm y^j - \vm x^{(k)}) \\ |
	\end{matrix}\right)
}_{=:\vm b_+^j}.\notag
\end{align}
%
and append it to the original BOBYQA system, s.t. we obtain
\begin{equation}
\MHB =
\left(\begin{matrix}
\MB \\ \vm M_+^0 \\ \vdots \\ \vm M_+^p
\end{matrix}\right)
\ \ \text{ and } \ \ 
\bHB =
\left(\begin{matrix}
\bB \\ \vm b_+^0 \\ \vdots \\ \vm b_+^p
\end{matrix}\right).
\end{equation}
In case only some partial derivatives are known, i.e., $\kd<n$, only the first $\nd$ rows of each $\vm M_+^j$ and $\vm b_+^j$ are included.
Since $\bB \in \RR^{p+n}$ and each $\vm b_+^j \in \RR^{\kd}$, we obtain $\bHB \in \RR^{p+n+p_1\kd}$, and $\MHB \in \RR^{p+n+p_1\kd\times p+n+p_1\kd}$ respectively. 
The Hermite BOBYQA system is then given by
\begin{equation}
\MHB \vm v^{(k)} \stackrel{\text{l.s.}}{=} \bHB
\label{eq:HermiteBOBYQA_LGS}
\end{equation}
and the coefficients for the Hermite BOBYQA model are computed analogously to the original BOBYQA method by (\ref{eq:BOBYQA_LGSmfn_c_g}--\ref{eq:BOBYQA_LGSmfn_H}).

In BOBYQA the update procedure for the training data set is based on the Lagrange polynomials in the minimum-norm sense and the corresponding Def.~\ref{def:LambdaPoisedMinNorm} for $\Lambda$-poisedness in the minimum-norm sense.
As in the Hermite least squares sense, we could use the same update procedure and argue that a subset of the regression set is $\Lambda$-poised (in the minimum-norm sense). However, the derivative information would not be considered when updating the training data set. Instead we solve
\begin{equation}
\MHB \vms \lambda^i \stackrel{\text{l.s.}}{=} \vm e^{i+1},
\label{eq:HermiteBOBYQA_Lagrange}
\end{equation}
build polynomials from the coefficients of $\vms \lambda^i$ according to~(\ref{eq:BOBYQA_LGSmfn_c_g}--\ref{eq:BOBYQA_LGSmfn_H}) and use them in the update decisions~(\ref{eq:BOBYQA_LeavingPoint}--\ref{eq:BOBYQA_NewPoint}).

\subsection{Scaling and weighting}

In this section we will discuss some preconditioning steps for solving the linear equations systems. In PyBOBYQA~\cite{Cartis_2019aa} the system is scaled in the following way: instead of
\begin{equation}
\vm M \vm v = \vm b
\label{eq:Scaling_LGSwithout}
\end{equation}
the system
\begin{equation}
\vm L \vm M \vm R \vm R^{-1} \vm v = \vm L \vm b
\label{eq:Scaling_LGSwith}
\end{equation}
is solved, where $\vm L$ and $\vm R$ are diagonal matrices of the same dimension as $\vm M$. Each training data point entry $\vm y^i$ is scaled by the factor $1/\Delta$, where $\Delta$ is the trust region radius of the current step (for simplicity of notation we omit the index $k$ for the current iteration for both the trust region radius and the system).
Thus, in the fully determined case the matrices are given by
\begin{equation}
\vm L = \vm I \ \text{ and } \ \vm R = 
\text{diag} \bigg(
\underbrace{\frac{1}{\Delta}  \ \ \dots \ \ \frac{1}{\Delta}}_{p} \ \
\underbrace{\frac{1}{\Delta^2} \ \ \dots \ \ \frac{1}{\Delta^2}}_{p-n}
\bigg),
\label{eq:Scaling_LR_fullydet}
\end{equation}
i.e., the columns for the linear part are scaled by $1/\Delta$, the columns for the quadratic part by $1/\Delta^2$.
In the underdetermined BOBYQA system the scaling matrices result in
\begin{equation}
\vm L = \vm R =
\text{diag} \bigg(
\underbrace{\frac{1}{\Delta^2}  \ \ \dots \ \ \frac{1}{\Delta^2}}_{p} \ \
\underbrace{\Delta \ \ \dots \ \ \Delta \vphantom{\frac{1}{\Delta^2}}}_{n}
\bigg).
\label{eq:Scaling_LR_BOBYQA}
\end{equation}
Preserving the same scaling scheme, the following left scaling matrices are obtained for the Hermite least squares approach
\begin{equation}
\vm L =
\text{diag} (
\underbrace{1  \ \ \dots \ \ 1}_{p} \ \
\underbrace{\Delta \ \ \dots \ \ \Delta}_{p_1\kd}
)
\label{eq:Scaling_LR_Hls}
\end{equation}
and 
\begin{equation}
\vm L =
\text{diag} \bigg(
\underbrace{\frac{1}{\Delta^2}  \ \ \dots \ \ \frac{1}{\Delta^2}}_{p} \ \
\underbrace{\Delta  \ \ \dots \ \ \Delta \vphantom{\frac{1}{\Delta^2}}}_{n} \ \
\underbrace{\frac{1}{\Delta} \ \ \dots \ \ \frac{1}{\Delta}}_{p_1\kd}
\bigg)
\label{eq:Scaling_LR_HBOBYQA}
\end{equation}
for the Hermite BOBYQA approach, respectively. The right scaling matrices remain unchanged, i.e., $\vm R$ as in~\eqref{eq:Scaling_LR_fullydet} for Hermite least squares and $\vm R$ as in~\eqref{eq:Scaling_LR_BOBYQA} for Hermite BOBYQA.


An optional step is the weighting of the least squares information, cf. weighted regression~\cite{Bjorck_1996}. Information belonging to a training data point close to the current solution can be given more weight, information belonging to a training data point far from the current solution can be given less weight. 
In the Hermite least squares approach this weighting can be applied to value and derivative information separately, because for each information there exists exactly one row. I.e., each row of the system matrix and the right hand side, which belongs to the data point $\vm y^i$, $i=0,\dots,p$ is multiplied with the corresponding weight. 
This is realized by the left multiplication of a diagonal weighting matrix $\vm D$ to both sides of~\eqref{eq:RegressProblemGeneral}, i.e., we obtain
\begin{equation}
\vm D \vm M \vm v \stackrel{\text{l.s}}{=} \vm D \vm b.
\label{eq:RegressProblemGeneral_weighted}
\end{equation}
In the Hermite BOBYQA approach, the first $p+n$ rows do not belong to one specific training data point. Thus, in this approach we only weight the derivative information, i.e., the first $p+n$ diagonal entries of $\vm D$ are set to $1$.

\section{Numerical tests}
\label{sec:NumericalTests}
A test set of almost $30$ nonlinear, bound constrained optimization problems with $2$, $3$, $4$, $5$ and $10$ dimensions has been evaluated and compared. The complete test set and the detailed results can be found at GitHub~\cite{Fuhrlander_Testset}. As reference solutions we consult the solution of PyBOBYQA using either fully determined interpolation, i.e., $p_1 = (n+1)(n+2)/2$, in the following referenced to as $\reffull$, or using the default setting in PyBOBYQA~\cite{Cartis_2019aa}, i.e., $p_1 = 2n+1$, in the following referenced to as $\refB$. For all remaining parameters we apply the default settings.
Another reference solution is \gls{SQP}, where the unknown derivatives are calculated with finite differences. Therefore, the SciPy implementation of the SLSQP algorithm from~\cite{Kraft_1988aa} has been used. Please note that this is a completely different implementation, thus a direct comparison should be handled with caution.
In the Hermite least squares approach, numerical tests showed that 
\begin{equation}
p_1 = \max\left(2n+1-\kd, \left\lceil \frac{(n+1)(n+2)}{2(1+\kd)} \right\rceil\right).
\label{eq:Num_Hermite_NumberTrainPoints}
\end{equation}
is a reasonable choice for the number of training data points. For the Hermite BOBYQA approach we use the same number of interpolation points as in $\refB$, 
i.e., $p_1 = 2n+1$.
Further, we vary the number of known derivative directions $\nd$ and always assume that these derivative directions are then available for all training data points.

In the tests we are interested in two aspects: 1) do we find an optimal solution and 2) how much computing time is needed. For 1) we check if we find the same solution as the reference methods. Please note that we considered only test functions for which the reference BOBYQA method was able to find the optimal solution. For 2) we compare the total number of objective function evaluations during the whole optimization process.

\subsection{Results}

Before we evaluate the complete test set, we analyze the accuracy of the quadratic model $m^{(k)}(\vm x)$ which is built in each iteration.
Let us consider the Rosenbrock function in $\RR^2$, given by
\begin{equation}
f(\vm x) = 100 (x_2 - x_1^2)^2 + (1 - x_1)^2.
\label{eq:Rosenbrock}
\end{equation}
We assume $\partial f / \partial x_1$ to be unknown and $\partial f / \partial x_2$ to be available.
The second order Taylor expansion $T_2f(\vm x;\vm x^{(k)})$ in $\vm x^{(k)}$ is considered as the reference model. We investigate the error between this Taylor reference and the quadratic model of $\refB$, Hermite least squares and Hermite BOBYQA evaluated by using the $L_2$-norm
\begin{equation}
\|m^{(k)}(\vm x) - T_2f(\vm x;\vm x^{(k)})\|_{L_2}^2
= \int_{\vm x^{(k)}-\delta}^{\vm x^{(k)}+\delta} \abs{m^{(k)}(\vm x) - T_2f(\vm x;\vm x^{(k)})}^2 \intd \vm x.
\label{eq:L2Norm}
\end{equation}
In Fig.~\ref{fig:Rosenbrock_ErrorQP}, the resulting error is plotted over the number of iterations. We observe that the error of the quadratic model decreases first for the Hermite least squares model. After $20$ iterations, the error remains below $2.5 \cdot 10^{-7}$. 
For $\refB$ and Hermite BOBYQA the error is also reduced to this magnitude, but it takes $65$ and $118$ iterations, respectively. However, the error for Hermite BOBYQA is already below  $10^{-6}$ after $50$ iterations, although at a slightly higher level than the other two methods. After an outlier in iteration $117$ it also reaches the error of $2.5 \cdot 10^{-7}$. The errors of the quadratic models reflect the performance of the three optimization methods. All of them find the same optimal solution. Hermite least squares is the most efficient method, it terminates after $43$ iterations. The reference method $\refB$ terminates after $106$ and Hermite BOBYQA after $153$ iterations. The number of objective function calls is proportional to the number of iterations.
\begin{figure}[t]
	\centering
	\includegraphics[width=0.6\textwidth]{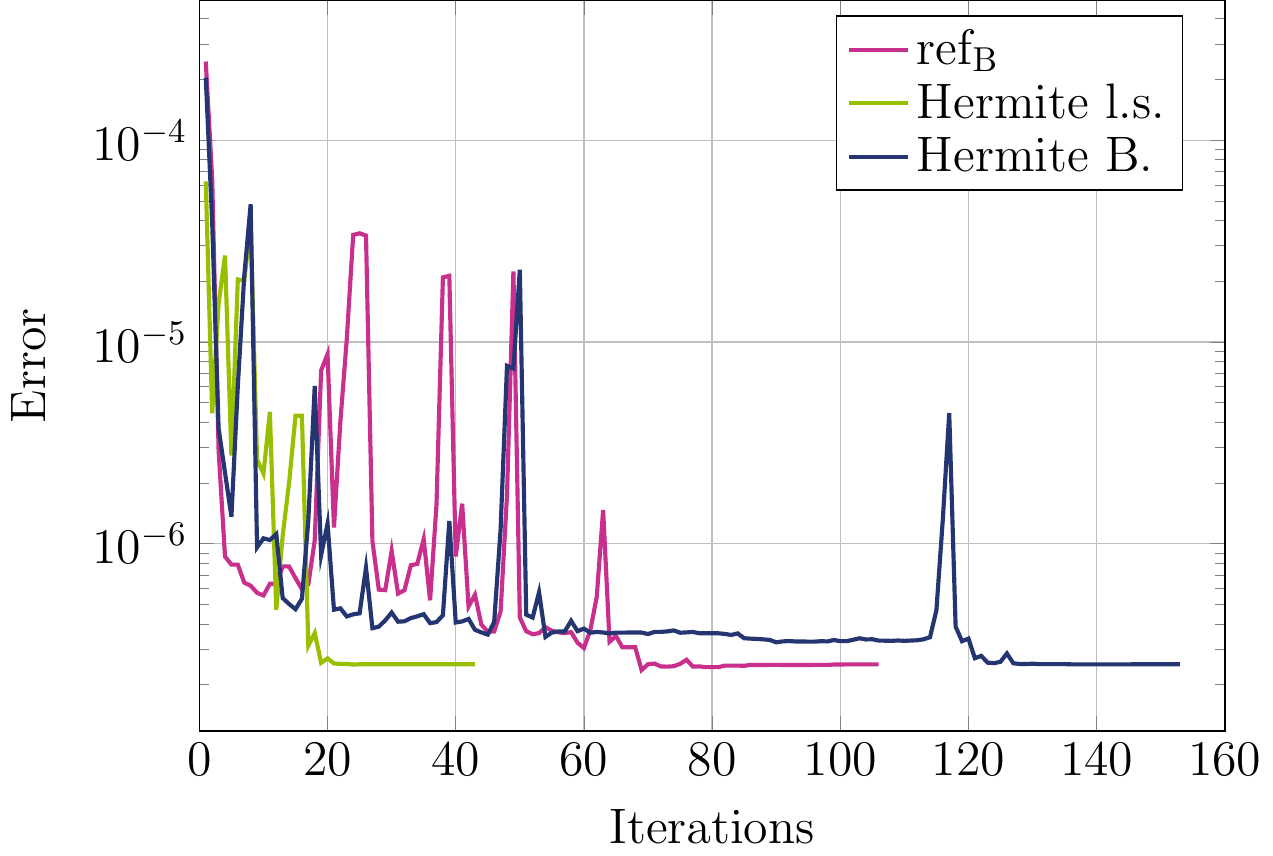}
	\caption{Error of the quadratic model $m^{(k)}$ in each iteration compared to second order Taylor expansion, cf.~\eqref{eq:L2Norm} with $\delta=0.01$. Three different methods are considered: classic BOBYQA ($\refB$), Hermite least squares and Hermite BOBYQA.}
	\label{fig:Rosenbrock_ErrorQP}
\end{figure}

\paragraph{Test set}
In the following, the test set from~\cite{Fuhrlander_Testset} is evaluated. The results are consistent with the observations regarding error and performance for the Rosenbrock function, which have been described above.
In Fig.~\ref{fig:results2dim}~--~\ref{fig:results10dim} the numerical results for the test set are visualized. We compare $\refB$ with Hermite least squares and Hermite BOBYQA and vary the number of known derivatives $\kd$, assuming that we know at least half of the derivative directions, i.e., $\kd\geq\frac{1}{2}n$. Then we take the average number of function evaluations. For example, in Fig.~\ref{fig:results3dim} for Hermite least squares with $\kd=2$ we average over all $3$-dimensional test problems, solved with Hermite least squares, with three cases each, i.e., 1) $\partial f/ \partial x_1$ and $\partial f/ \partial x_2$ are known, 2) $\partial f/ \partial x_1$ and $\partial f/ \partial x_3$ are known and 3) $\partial f/ \partial x_2$ and $\partial f/ \partial x_3$ are known. For the $10$-dimensional test problems we tested three random permutations of known derivatives per $\kd$. 
\begin{figure}[htb]
	\begin{minipage}[t]{.49\textwidth}
		\centering
		\includegraphics[width=\textwidth]{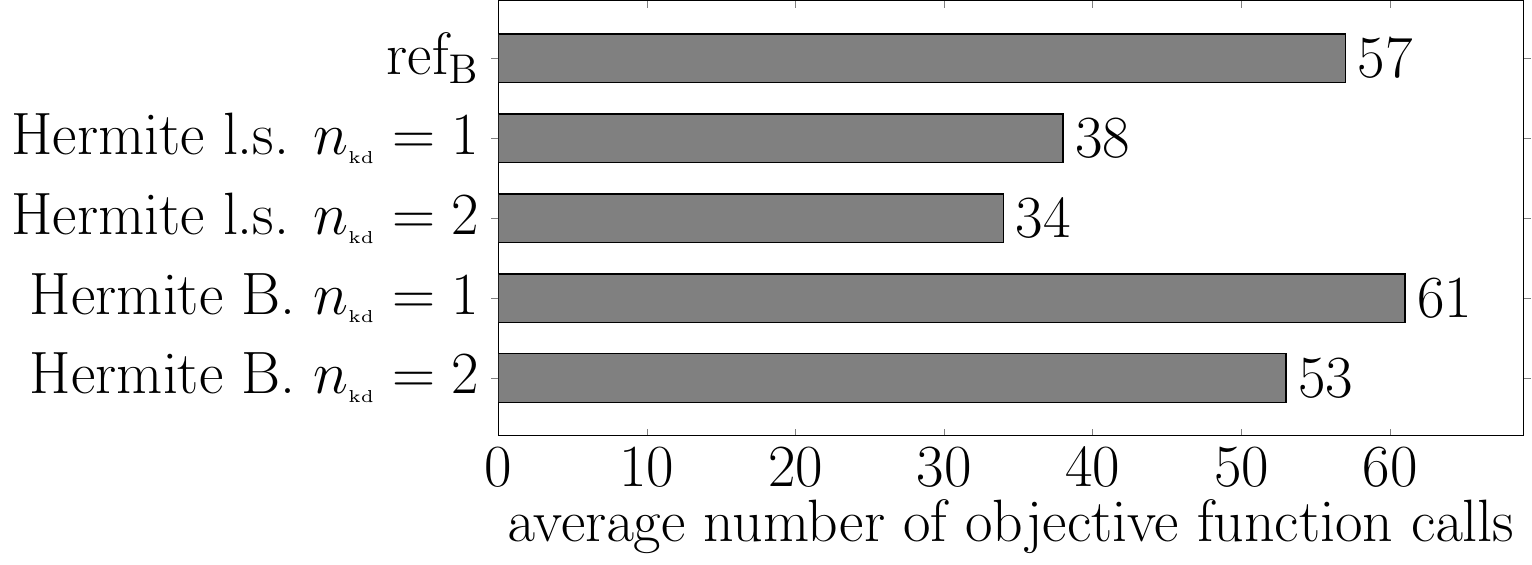}
		\subcaption{$2$-dimensional test problems.}\label{fig:results2dim}
	\end{minipage}
	\hfill
	\begin{minipage}[t]{.49\textwidth}
		\centering
		\includegraphics[width=\textwidth]{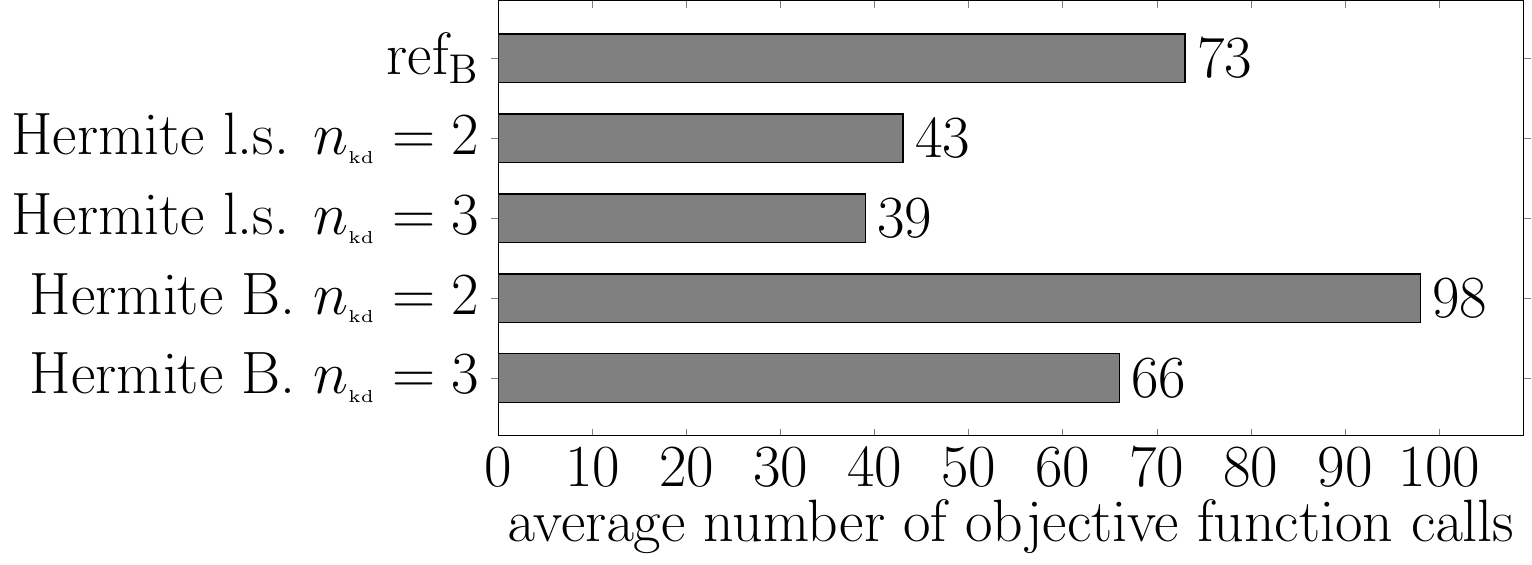}
		\subcaption{$3$-dimensional test problems.}\label{fig:results3dim}
	\end{minipage} \\ 
	\begin{minipage}[t]{.49\textwidth}
		\centering
		\includegraphics[width=\textwidth]{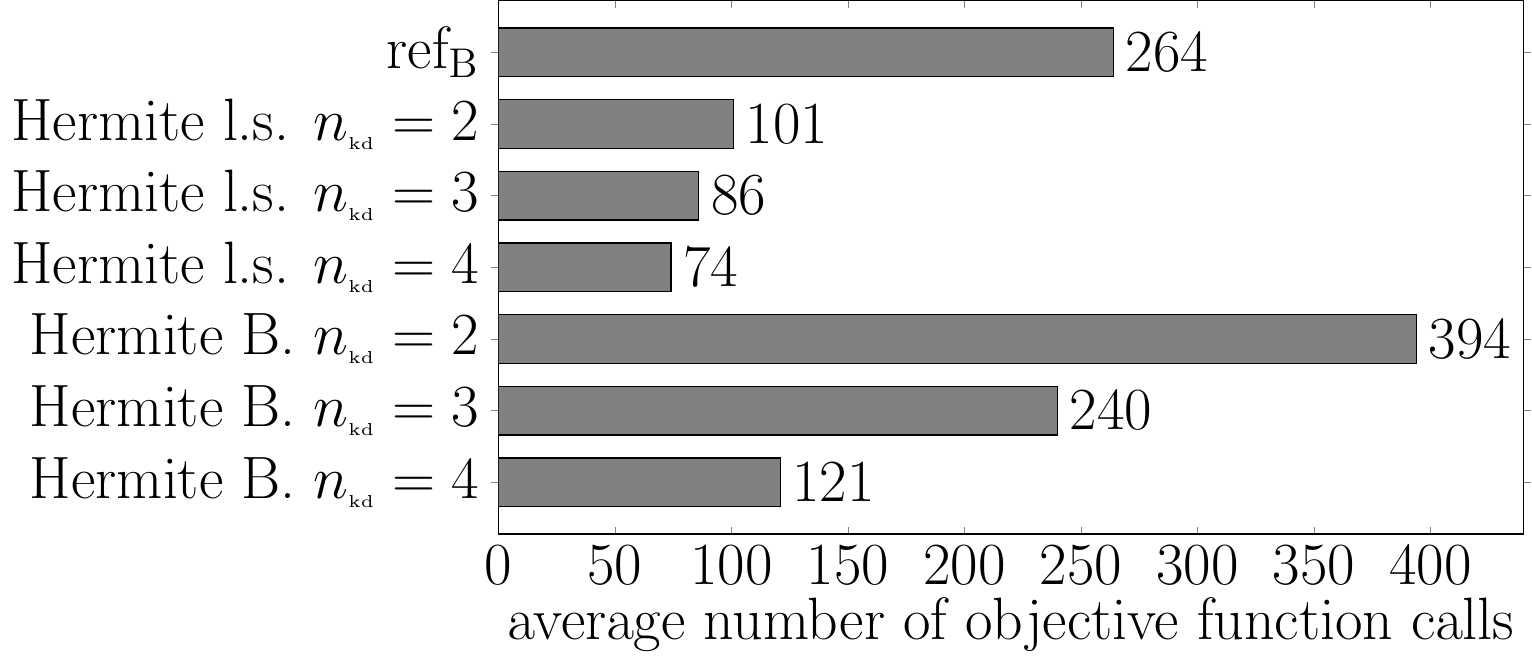}
		\subcaption{$4$-dimensional test problems.}\label{fig:results4dim}
	\end{minipage}
	\hfill
	\begin{minipage}[t]{.49\textwidth}
		\centering
		\includegraphics[width=\textwidth]{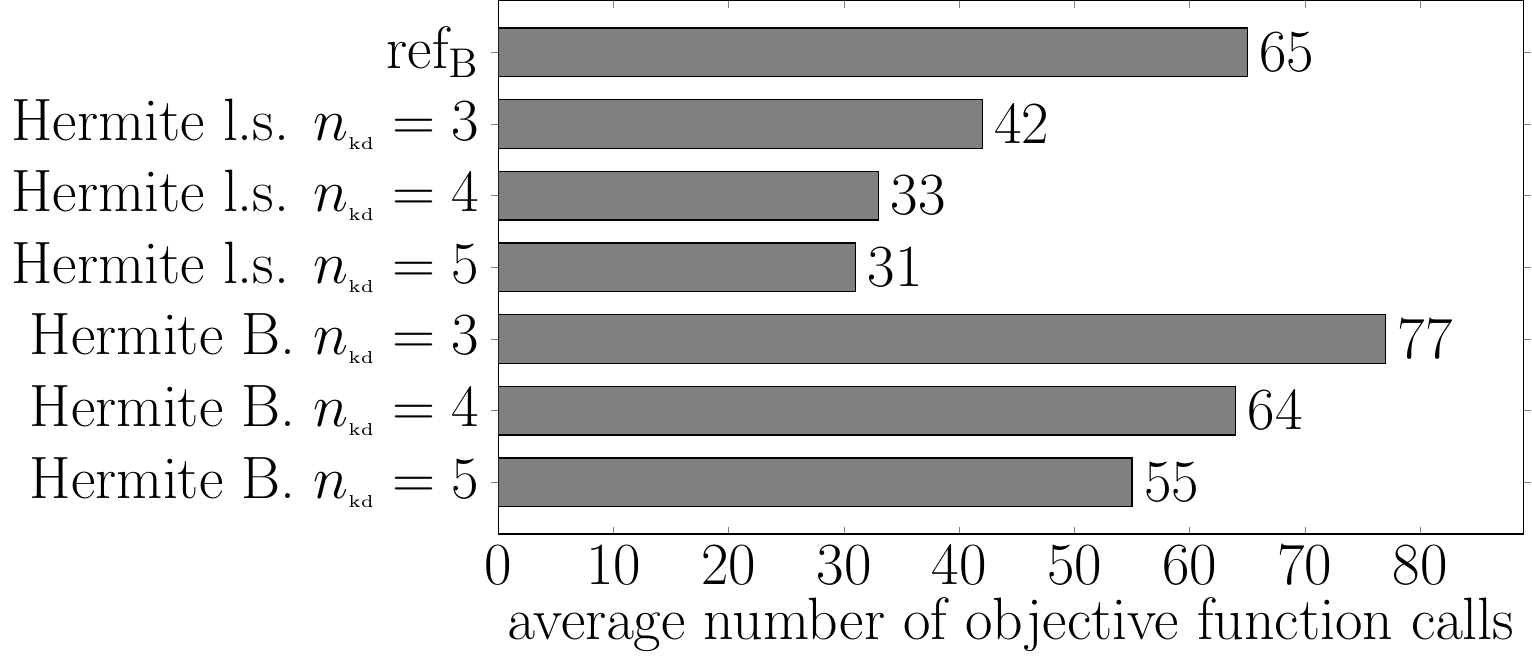}
		\subcaption{$5$-dimensional test problems.}\label{fig:results5dim}
	\end{minipage}\\ 
	\begin{minipage}[t]{.245\textwidth}
		\hspace{0.1cm}
	\end{minipage}
	\hfill
	\begin{minipage}[t]{.49\textwidth}
		\centering
		\includegraphics[width=\textwidth]{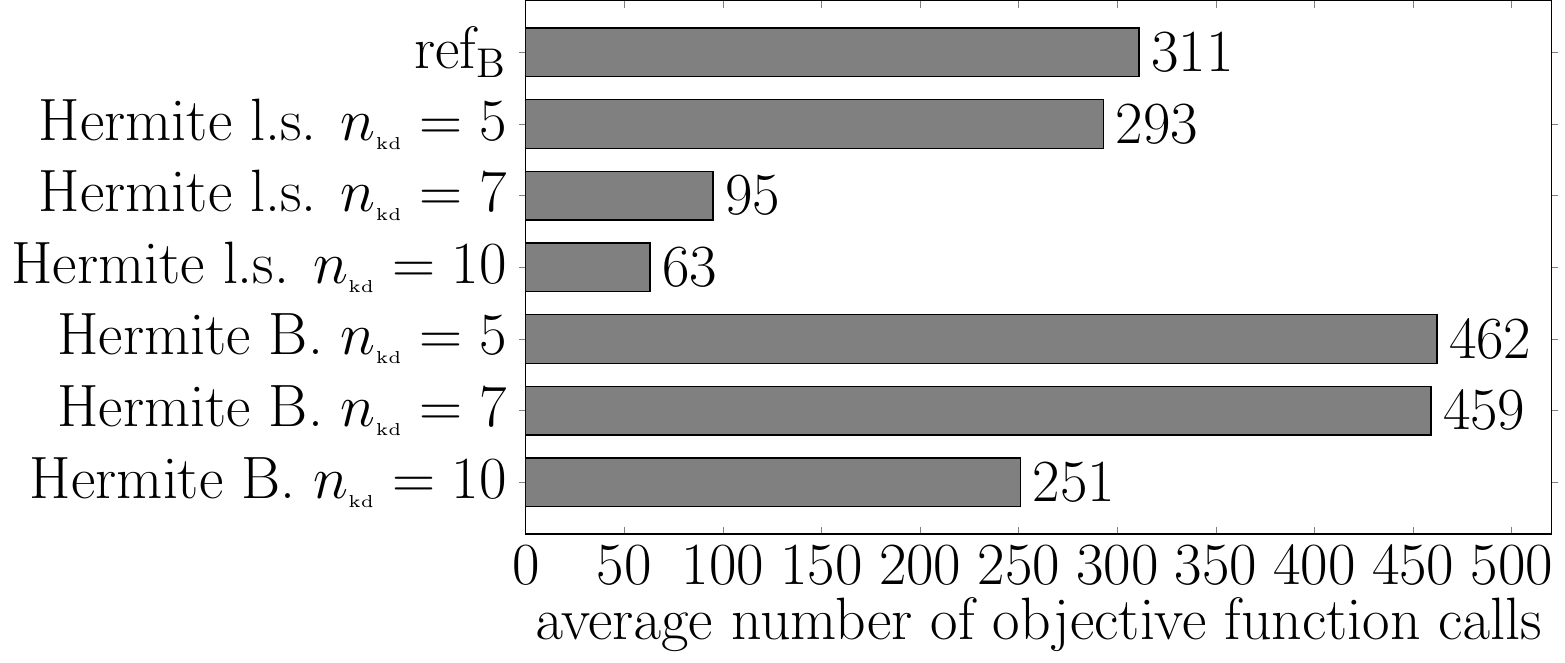}
		\subcaption{$10$-dimensional test problems.}\label{fig:results10dim}
	\end{minipage}
	\hfill
	\begin{minipage}[t]{.245\textwidth}
		\hspace{0.1cm}
	\end{minipage}
	\caption{Average number of function evaluations for all test problems solved with the reference BOBYQA method ($\refB$), Hermite least squares (Hermite l.s.) and Hermite BOBYQA (Hermite B.) with varying number of known derivatives $\kd$.}
	\label{fig:results}
\end{figure}

With Hermite least squares the number of objective function calls can be reduced by $34\,\% - 80\,\%$ compared to $\refB$, depending on dimension $n$ and number of known derivatives $\kd$. One exception is the case $n=10$ and $\kd=5$, where we can only see a reduction by $6\,\%$. Although, we observe that there are instances of some test problems for which the number of objective function calls increase (cf.~\cite{Fuhrlander_Testset}), in average the computational effort can be significantly reduced. The optimal solution has been found in all considered cases.
For the Hermite BOBYQA method we see that the computational effort increases for $\kd\approx \frac{1}{2}n$ by $8\,\% - 49\,\%$, while it can be decreased for $\kd\approx n$ by $7\,\% - 54\,\%$. In three test cases the optimal solution could not be found. The corresponding results have been excluded from the summary plots in Fig.~\ref{fig:results} but are listed in~\cite{Fuhrlander_Testset}.
For both variants, Hermite least squares and Hermite BOBYQA, we see that the more gradient information we have, the less objective function evaluations are needed within one optimization run. 

We can conclude that, assuming about the half or more of the partial derivatives are known, using the Hermite least squares approach instead of the classic BOBYQA method ($\refB$) reduces the computational effort significantly. On the other hand, the Hermite BOBYQA approach is only able to reduce the computational effort for a very high number of known derivatives. However, in case most of the derivatives are known, it is beneficial to use \gls{SQP} and calculate the few missing derivatives with finite differences.

In the numerical tests, we also compared the original BOBYQA and the proposed modifications with \gls{SQP}. While for Hermite least squares and Hermite BOBYQA we took PyBOBYQA from~\cite{Cartis_2019aa} as a basis and included the required modifications for the Hermite approaches, the \gls{SQP} method from SciPy based on~\cite{Kraft_1988aa} is a different implementation, using for example different ways to solve the quadratic subproblem. However, we could observe that in almost all cases, the \gls{SQP} method required a lower number of objective function calls in order to find the optimal solution than $\refB$, and in most cases also less than Hermite least squares. Please note that in the \gls{SQP} method we provided the known derivatives and only calculated the remaining ones with finite differences.


\paragraph{Weighted regression}
We apply an exponential weighting scheme to the Hermite least squares and the Hermite BOBYQA approach, where the weight for a training data point $\vm y^i$ is defined by
\begin{equation}
w(\vm y^i) = \frac{1}{\exp{(s)}} \exp{\left( s-s\frac{||\vm y^i-\vm x^{(k)}||_2}{\max_{i=0,\dots,p}||\vm y^i-\vm x^{(k)}||_2}\right)}, \ \ \forall i=0,\dots,p
\label{eq:Weighting_expo}
\end{equation}
with scaling factor $s>0$. In our tests we set $s=5$.
Please note that the number of weights is smaller than the dimension of $\vm D$ since several rows belong to the same training data point and thus, are weighted with the same $w(\vm y^i)$. 
For most of the test cases the weighting has no significant impact. 
We therefore do not apply the weighting per default.

\subsection{Noisy data}\label{sec:noisy}

Let us consider the Rosenbrock function 
from~\eqref{eq:Rosenbrock}
and investigate the performance of the different methods under noise.
In accordance to~\cite{Cartis_2019aa}, for that purpose we add random statistical noise to the objective function value and the derivative values by multiplying the results with the factor $1+\xi$, where $\xi$ is a uniformly distributed random variable, i.e., $\xi \sim \mathcal{U}(-10^{-2},10^{-2})$. Again, for \gls{SQP} and the Hermite approaches we assume $\partial f / \partial x_1$ to be unknown and $\partial f / \partial x_2$ to be available.

The optimal solution of~\eqref{eq:Rosenbrock} is
\begin{equation}
\vm x^{\text{opt}} = (1,1) \ \text{ with } \ f\left(\vm x^{\text{opt}}\right) = 0.
\label{eq:Rosenbrock_optsol}
\end{equation}
We start the optimization with
\begin{equation}
\vm x^{\text{start}} = (1.2,2) \ \text{ with } \ f\left(\vm x^{\text{start}}\right) = 31.4.
\label{eq:Rosenbrock_start}
\end{equation}
First we apply the Hermite least squares method. It terminates after only $37$ function calls with the optimal solution
\begin{equation}
\vm x^{\text{H.l.s.}} = (1,1) \ \text{ with } \ f\left(\vm x^{\text{H.l.s.}}\right) = 1.02 \cdot 10^{-23}.
\label{eq:Rosenbrock_optsol_Hls}
\end{equation}
Without noise, $43$ function calls were needed to find the optimum. The noise did not lead to an increase in computing effort.
Next, we apply Hermite BOBYQA, which terminates after $109$ function calls and also reaches the optimal solution
\begin{equation}
\vm x^{\text{H.B.}} = (1,1) \ \text{ with } \ f\left(\vm x^{\text{H.B.}}\right) = 2.97 \cdot 10^{-19}.
\label{eq:Rosenbrock_optsol_HB}
\end{equation}
We compare these results to the reference solution $\refB$. After $43$ objective function calls the algorithm terminates without reaching the optimum
\begin{equation}
\vm x^{\text{B}} = (1.41, 1.98) \ \text{ with } \ f\left(\vm x^{\text{B}}\right) = 0.16.
\label{eq:Rosenbrock_optsol_B}
\end{equation}
Additionally to $\refB$ we consider the PyBOBYQA version for noisy data from~\cite[Sec. 7]{Cartis_2019aa} as reference solution $\refN$. The main differences compared to $\refB$ are another choice of default parameters for adjusting the trust region radius (better suited for noisy data), sample averaging and multiple restarts.
Even with a high budget of $2000$ objective function calls the algorithm does not terminate. In order to compare the results with Hermite least squares, we set the budget to the number of required objective function calls to terminate the Hermite least squares method, i.e., to $37$, and evaluate $\refN$
\begin{equation}
\vm x^{\text{N},37} = (1.08, 1.17) \ \text{ with } \ f\left(\vm x^{\text{N},100}\right) = 0.01.
\label{eq:Rosenbrock_optsol_BN_100}
\end{equation}
This means, the optimum could not be sufficiently identified within this budget.
Finally, we apply the gradient based \gls{SQP}. It terminates \textit{successfully} after $41$ objective function calls, and even though the solution has improved, the optimum could not be reached
\begin{equation}
\vm x^{\text{SQP}} = (0.21, -0.00) \ \text{ with } \ f\left(\vm x^{\text{SQP}}\right) = 0.31.
\label{eq:Rosenbrock_optsol_SQP}
\end{equation}

The results are visualized in Fig.~\ref{fig:Rosenbrock_noisy}. We conclude that the gradient based solver \gls{SQP} fails as to be expected in optimizing the noisy Rosenbrock function. While the standard PyBOBYQA version $\refB$ also terminates without reaching the optimum, the noisy PyBOBYQA version $\refN$ reaches the optimum, but does not terminate.
The regression approach in the Hermite least squares and the Hermite BOBYQA method robustify the optimization under noisy data. Both achieve the optimal solution at low computational costs, especially the Hermite least squares requires only $37$ objective function calls.
\begin{figure}[t]
	\centering
	\includegraphics[width=0.7\textwidth]{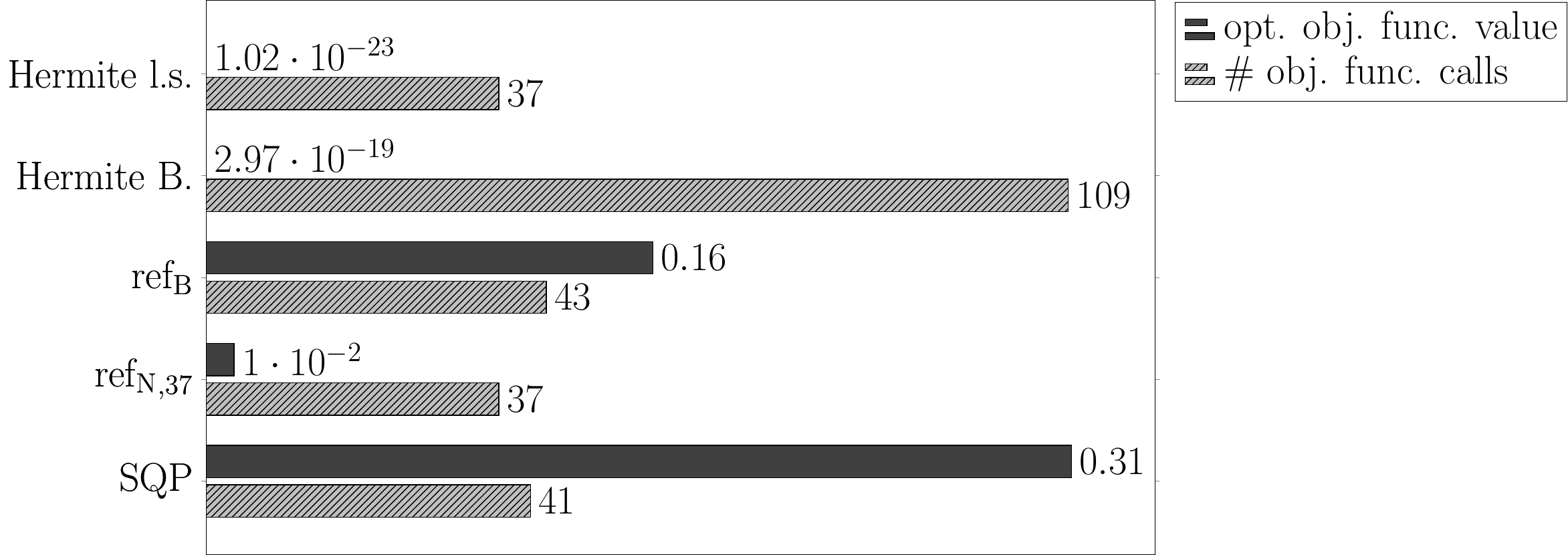}
	\caption{Results for the optimization of the noisy Rosenbrock function~\eqref{eq:Rosenbrock}, solved with Hermite least squares (Hermite l.s.), Hermite BOBYQA (Hermite B.), the reference BOBYQA method ($\refB$), the reference BOBYQA method for noisy data with maximum budget $37$ (${\refN}_{,37}$) and the reference SQP method (SQP).}
	\label{fig:Rosenbrock_noisy}
\end{figure}

\subsection{Hermite least squares with second order derivatives}\label{sec:Num2ndderivs}

Again we consider the Rosenbrock function~\eqref{eq:Rosenbrock} as test function and investigate the usage of second order derivatives, according to the formulation in Appendix~\ref{sec:Hls_2ndDerivs}. In this example we observe that the usage of second derivatives additionally to first derivatives further reduces the computing effort. The results are given in Table~\ref{tab:2ndDerivsResults}. Since second order derivatives are rarely available in practice, so we do not further extend the numerical tests.
\begin{table}[h]
	\centering
	\begin{tabular}{c|ccc}
		& \multicolumn{3}{|c}{$\mathcal{I}_{\text{d}}$ resp. $\mathcal{I}_{\text{2d}}$} \\ 
		&  $\left[0\right]$     &  $\left[1\right]$  &  $\left[0,1\right]$   \\
		\hline 
		1st order         &  $67$        &  $43$     &  $40$   \\
		2nd order         &  $62$        &  $40$     &  $38$   
	\end{tabular}
	\caption{Number of objective function calls for Rosenbrock function~\eqref{eq:Rosenbrock} depending on the set of known derivatives and the usage of first and second order derivatives or first order derivatives only.}
	\label{tab:2ndDerivsResults}
\end{table}

\section{A practical example: yield optimization}
\label{sec:PracticalExample}

In this section we discuss a practical example where the case of known and unknown gradients occur. In the field of uncertainty quantification, yield optimization is a common task~\cite{Graeb_2007aa}. In the design process of a device, e.g. antennas, electrical machines or filters, geometry and material parameters are chosen such that predefined requirements are fulfilled. However, in the manufacturing process, there are often uncertainties which lead to a deviation in the optimized design parameters and this may cause a violation of the requirements. The aim of yield estimation is the quantification of the impact of this uncertainty. The yield defines the probability, that the device still fulfills the performance requirements, under consideration of the manufacturing uncertainties. Thus, the natural goal is to maximize the yield. Please note, the task of yield maximization is equivalent to the task of failure probability minimization. We will formally introduce the yield and discuss the task of yield optimization with an example from the field of electrical engineering: a simple dielectrical waveguide as depicted in Fig.~\ref{fig:waveguide}. The model of the waveguide originates from~\cite{Loukrezis_WG}, and was used for yield optimization previously, e.g. in~\cite{Fuhrlander_2020aa}.
\begin{figure}
	\centering
	\includegraphics[width=0.5\textwidth]{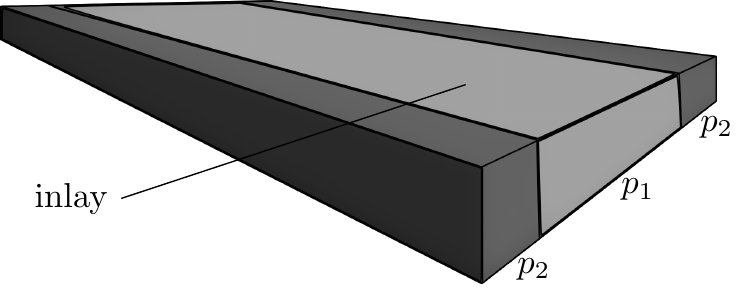}
	\caption{Model of a simple waveguide with dielectrical inlay and two geometry parameters $p_1$ and $p_2$.}
	\label{fig:waveguide}
\end{figure}

The waveguide has four design parameters, which shall be modified. Two uncertain geometry parameters: the length of the inlay $p_1$ and the length of the offset $p_2$. And two deterministic material parameters: $d_1$ with impact on the relative permittivity and $d_2$ with impact on the relative permeability. The uncertain parameters are modeled as Gaussian distributed random variables. Let the mean value (for the starting point, $k=0$) and the standard deviation been given by
\begin{equation}
\overline{p}_{1}^{(0)} = 9\,\text{mm}, \ \overline{p}_{2}^{(0)} = 5\,\text{mm}, \ \sigma_1 = \sigma_2 = 0.7.
\label{eq:yield_uncertpara}
\end{equation}
The starting points for the deterministic variables are
\begin{equation}
d_{1}^{(0)} = d_{2}^{(0)} = 1.
\label{eq:yield_determpara}
\end{equation}
The optimization variable is defined by
\begin{equation}
\vm x = (\overline{p}_{1}, \overline{p}_{2}, d_1, d_2)^{\transpose}.
\label{eq:yield_optvariable}
\end{equation}
As quantity of interest we consider the scattering parameter $S_r$ (S-parameter), which gives us information about the reflection behavior of the electromagnetic wave passing the waveguide. In order to calculate the value of the S-parameter for a specific setting, the electric field formulation of Maxwell has to be solved numerically, e.g. with the finite element method (FEM). Then we formulate the performance requirement by
\begin{equation}
S_r(\vm x) \leq -24 \, \text{dB} \ \forall r \in T_r=[2\pi 6.5,2\pi 7.5] \text{ in GHz},
\label{eq:yield_pfs}
\end{equation}
where the so-called range parameter $r$ is the angular frequency.
The range parameter interval $T_r$ is discretized in eleven equidistant points and~\eqref{eq:yield_pfs} has to be fulfilled for each of these points. The safe domain is the set of combinations of the uncertain parameters fulfilling the requirements, and depends on the current deterministic variable, i.e.,
\begin{equation}
\SD \equiv \SD_{d_1,d_2}(p_1,p_2) := \left\lbrace (p_1,p_2): S_r(\vm x) \leq -24 \, \text{dB} \ \forall r \in T_r \right\rbrace.
\label{eq:yield_SD}
\end{equation}

We follow the definitions from~\cite{Graeb_2007aa}. The yield, i.e., the probability of fulfilling all requirements~\eqref{eq:yield_pfs} under consideration of the uncertainties~\eqref{eq:yield_uncertpara}, is defined by
\begin{equation}
Y(\vm x) := \EE\left[\One_{\SD}(p_1,p_2)\right] = \int_{\RR}  \int_{\RR} \One_{\SD}(p_1,p_2) \pdf_{\overline{p}_1,\overline{p}_2,\sigma_1,\sigma_2} (p_1,p_2) \intd p_1 \intd p_2,
\label{eq:yield_yield}
\end{equation}
where $\One_{\SD}(p_1,p_2)$ defines the indicator function with value $1$ if $(p_1,p_2)$ lies inside the safe domain and $0$ elsewise, and $\pdf$ defines the probability density function of the two dimensional Gaussian distribution.
This can be numerically estimated by a Monte Carlo analysis, i.e.,
\begin{equation}
Y_{\text{MC}}(\vm x) = \frac{1}{\Nmc} \sum_{i=1}^{\Nmc} \One_{\SD}\left(p_1^{(i)},p_2^{(i)}\right),
\label{eq:yield_MC}
\end{equation}
where $(p_1^{(i)},p_2^{(i)})_{i=1,\dots,\Nmc}$ are sample points according to the distribution of the uncertain design parameters.
Since for each sample point the S-parameter has to be calculated, the yield estimator is a computationally expensive function. In the next step, this function shall be optimized, i.e.,
\begin{equation}
\max_{\vm x} Y(\vm x).
\label{eq:yield_opt}
\end{equation}
Since $\overline{p}_1$ and $\overline{p}_2$ in~\eqref{eq:yield_yield} are only contained in the probability density function the derivative with respect to the mean values of the uncertain parameters is given by
\begin{equation}
\frac{\partial}{\partial \overline{p}_j} Y(\vm x) = \int_{\RR}  \int_{\RR} \One_{\SD}(p_1,p_2) \frac{\partial}{\partial \overline{p}_j} \pdf_{\overline{p}_1,\overline{p}_2,\sigma_1,\sigma_2} (p_1,p_2) \intd p_1 \intd p_2, \ j=1,2.
\end{equation}
And since the probability density function of the Gaussian distribution is an exponential function, it is continuously differentiable, thus the derivatives with respect to $\overline{p}_1$ and $\overline{p}_2$ can be calculated easily. Further, according to~\cite{Graeb_2007aa}, the derivative of the MC yield estimator with respect to the uncertain parameters is given by
\begin{equation}
\frac{\partial}{\partial \overline{p}_j} Y_{\text{MC}}(\vm x) 
= Y_{\text{MC}}(\vm x) \frac{1}{\sigma_j^2} (\overline{p}_{j,\SD}-\overline{p}_j),  \ j=1,2.
\end{equation}
where $\overline{p}_{j,\SD}$ is the mean value of all sample points of $p_j$ lying inside the safe domain.
This implies that there are not only closed-form expressions of these derivatives, but also numerical expressions which require only the evaluation of the objective function (which is anyway necessary), but no further computational effort. 
On the other hand, the deterministic variables are contained in the indicator function in~\eqref{eq:yield_yield}. Thus, the corresponding partial derivatives are not considered as available. 
This leads to the situation that two partial derivatives are available, and two are unknown. The Hermite approaches described above can be applied and compared with standard BOBYQA and \gls{SQP}.

There are two possibilities for the generation of the Monte Carlos sample set: a) the same sample set is used in each iteration and just shifted according to the current mean value (no noise) and b) the sample set is generated newly each time the mean value is changed (noise). Depending on the size of the sample set, the accuracy can be controlled. In the following we investigate three different settings:
\begin{enumerate}
	\item no noise: same sample set (a) and $\Nmc=2500$
	\item low noise: new sample sets (b) and $\Nmc = 2500$
	\item high noise: new sample sets (b) and $\Nmc = 100$
\end{enumerate}
We start with the no noise setting and compare the different optimization methods with respect to the optimal value reached and the number of objective function calls needed. The initial yield value is $Y_{\text{MC}}^{(0)}=42.8\,\%$. The results are visualized in Fig.~\ref{fig:yield_nn}.
\begin{figure}
	\centering
	\includegraphics[width=0.7\textwidth]{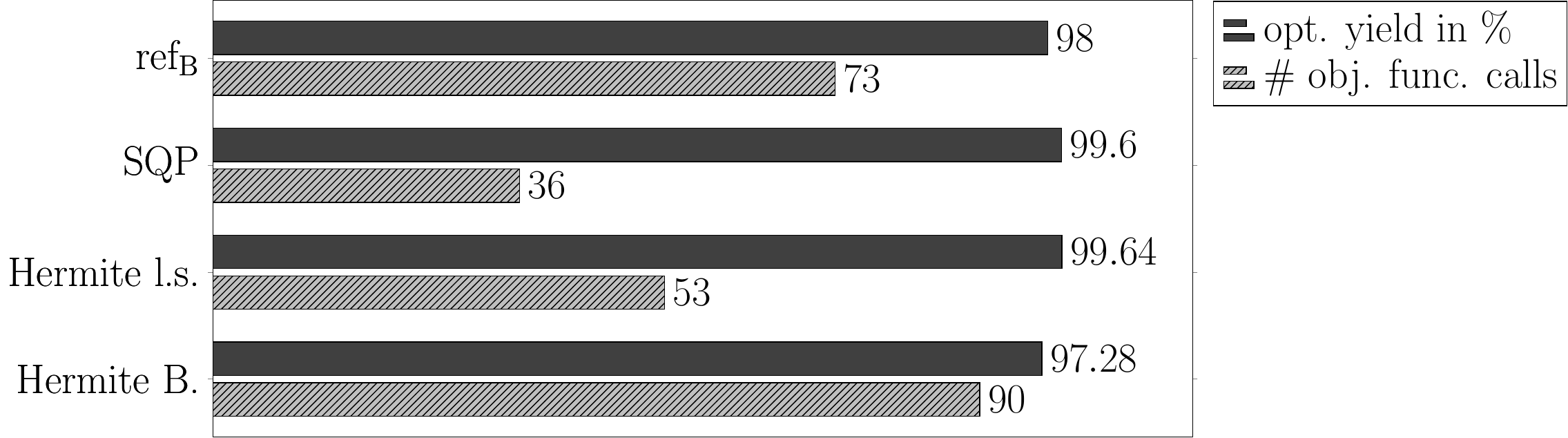}
	\caption{Results for yield optimization in the no noise setting, solved with the reference BOBYQA method ($\refB$), the reference SQP method (SQP), Hermite least squares (Hermite l.s.) and Hermite BOBYQA (Hermite B.).}
	\label{fig:yield_nn}
\end{figure}
While the optimal yield values are similar (best for Hermite l.s. and \gls{SQP}), the computational effort varies significantly. \gls{SQP} performs best with only $36$ objective function calls, Hermite l.s. is at the second position with $~50\,\%$ more, then $\refB$ with $~100\,\%$ more. Hermite BOBYQA performs worst with $90$ function calls, which corresponds to additional $150\,\%$. This coincides with our findings in sec.~\ref{sec:NumericalTests}. There we could also observe that Hermite least squares performes best (excluding \gls{SQP}) and Hermite BOBYQA requires more than $n/2$ known derivatives to be competitive with the other approaches.

In the next step we evaluate the noisy settings. 
The results for the low noise setting are visualized in Fig.~\ref{fig:yield_ln}, and for the high noise setting in Fig.~\ref{fig:yield_hn}, respectively.
\begin{figure}
	\begin{minipage}[t]{\textwidth}
		\centering
		\includegraphics[width=0.7\textwidth]{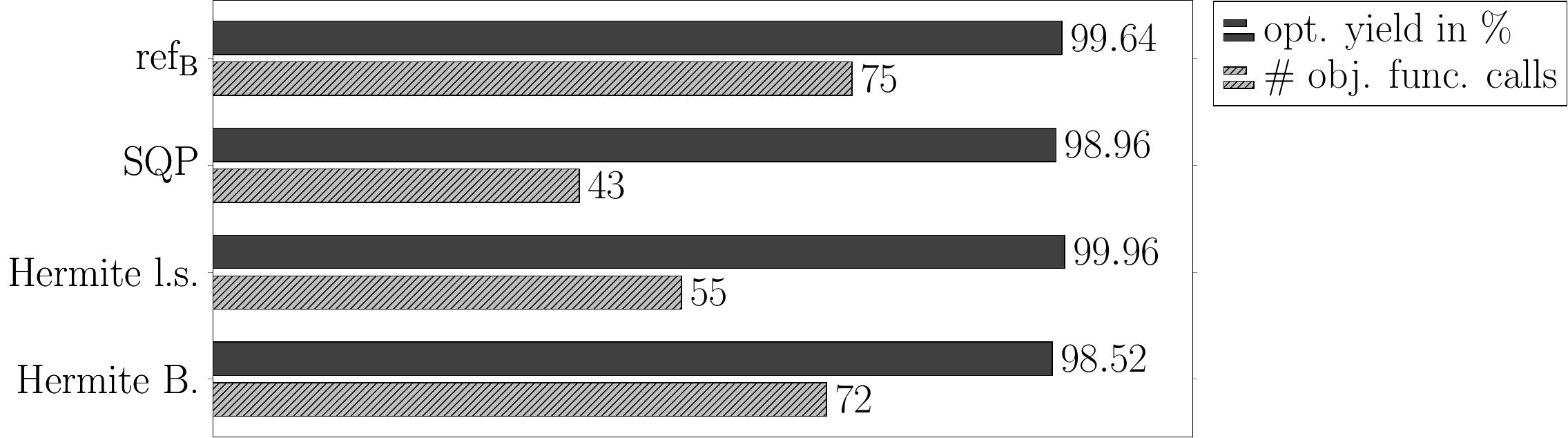}
		\subcaption{Low noise setting.}\label{fig:yield_ln}
	\end{minipage} \\ 
	\begin{minipage}[t]{\textwidth}
		\centering
		\includegraphics[width=0.7\textwidth]{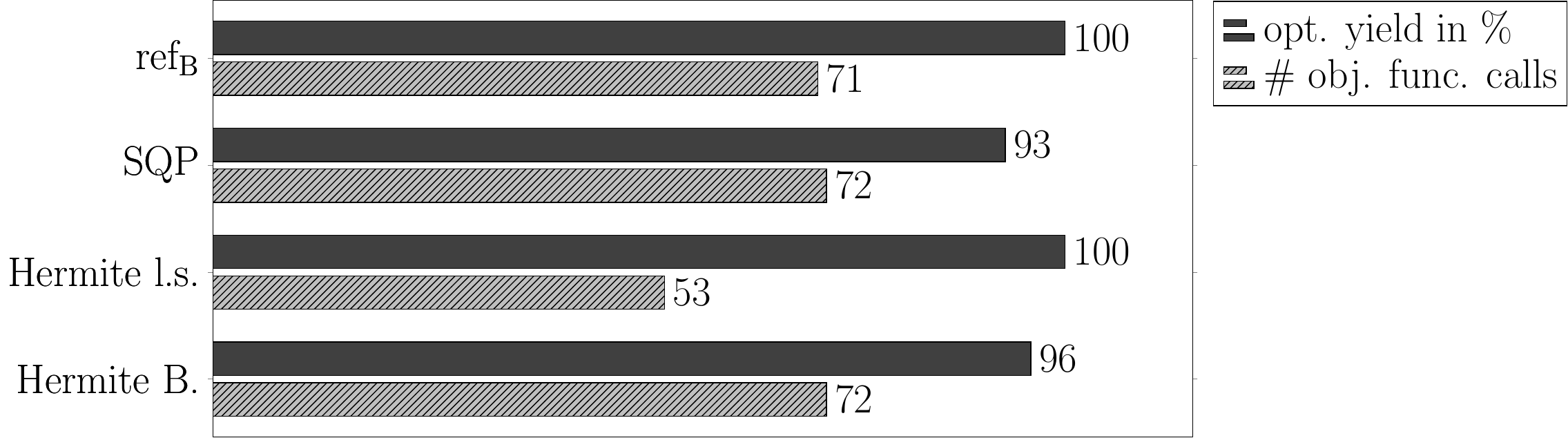}
		\subcaption{High noise setting.}\label{fig:yield_hn}
	\end{minipage}
	\caption{Results for yield optimization for noisy settings, solved with the reference BOBYQA method ($\refB$), the reference SQP method (SQP), Hermite least squares (Hermite l.s.) and Hermite BOBYQA (Hermite B.).}
	\label{fig:yield_noisy}
\end{figure}
%
As in the no noise setting, $\refB$ and the Hermite least squares find the optimal solution and the number of objective function calls does not change significantly when noise is added. With Hermite BOBYQA, the computational effort is reduced from $90$ function evaluations to $72$ in the noisy settings. Compared to the previous methods, the algorithm terminates with only slightly worse optimal solutions. 
While in the low noise setting \gls{SQP} finds a good optimal solution with the lowest number of objective function calls, in the high noise setting \gls{SQP} loses its advantage in terms of computational effort, and at the same time breaks down in finding an optimum.

%
In summary, for this practical example, the Hermite least squares method performs best in terms of solution quality and computational effort. Further we observe that the interpolation and regression based methods can handle the noise, while \gls{SQP} may not find the optimum anymore.

\section{Conclusion}
\label{sec:conclusion}

In this paper, we address the issue that in an optimization problem, some partial derivatives of the objective function are available and others are not. Based on Powell's derivative-free solver BOBYQA, we have developed two Hermite-type modifications: Hermite least squares and Hermite BOBYQA. In the Hermite least squares approach, besides function evaluations the available first and second order derivatives of a training data set is used to build a quadratic approximation of the original objective function. In each iteration, this quadratic subproblem is solved in a trust region and the training data set is updated. In Hermite BOBYQA, the regression problem builds upon BOBYQA's underdetermined interpolation system.

Under some assumptions, global convergence of the optimization algorithms can be proven. However, some small adjustments lead to higher performance regarding the computational effort and thus to higher practical applicability. Numerical tests on $30$ test problems including a practical example in the field of yield optimization have been performed. If half or more partial derivatives are available, the Hermite least squares approach outperforms classic BOBYQA in terms of computational effort by maintaining the ability of finding the optimal solution. Depending on the dimension and the amount of known derivative directions the number of objective function calls can be reduced by a factor up to five. If most of the partial derivatives are available, Hermite BOBYQA requires slightly less function calls than classic BOBYQA, however, its performance can be improved by applying exponential weighting. Both proposed methods are particularly stable with respect to noisy objective functions. In case of noisy data Hermite least squares and Hermite BOBYQA find the optimal solution more reliably and quickly than BOBYQA or gradient based solvers as sequential quadratic programming (SQP) using finite differences for missing derivatives.


\begin{appendices}

\section{Hermite least squares with second order derivatives} \label{sec:Hls_2ndDerivs}

The usage of second order derivatives is straightforward.
For the sake of completeness, we formulate the Hermite least squares system analog to~(\ref{eq:Mixed1_Mregress}--\ref{eq:Mixed1_LGS}). 
We denote the matrix with the second order derivative information corresponding to the $i$-th training data point by $\vm M_{\text{2d},i}$, the corresponding right hand side $\vm b_{\text{2d},i}$, respectively. They are given by
\begin{equation}
\vm M_{\text{2d},i} =
\left(\begin{matrix}
\frac{\partial^2}{\partial x_1 \partial x_1}\phi_1(\vm y^i - \vm x^{\text{opt}}) & \dots & \frac{\partial^2}{\partial x_1 \partial x_1}\phi_q(\vm y^i - \vm x^{\text{opt}}) \\
\frac{\partial^2}{\partial x_1 \partial x_2}\phi_1(\vm y^i - \vm x^{\text{opt}}) & \dots & \frac{\partial^2}{\partial x_1 \partial x_2}\phi_q(\vm y^i - \vm x^{\text{opt}}) \\
\vdots & & \vdots \\
\frac{\partial^2}{\partial x_1 \partial x_{\kds}}\phi_1(\vm y^i - \vm x^{\text{opt}}) & \dots & \frac{\partial^2}{\partial x_1 \partial x_{\kds}}\phi_q(\vm y^i - \vm x^{\text{opt}}) \\
\frac{\partial^2}{\partial x_2 \partial x_2}\phi_1(\vm y^i - \vm x^{\text{opt}}) & \dots & \frac{\partial^2}{\partial x_2 \partial x_2}\phi_q(\vm y^i - \vm x^{\text{opt}}) \\
\vdots & & \vdots \\
\frac{\partial^2}{\partial x_{\kds} \partial x_{\kds}}\phi_1(\vm y^i - \vm x^{\text{opt}}) & \dots & \frac{\partial^2}{\partial x_{\kds} \partial x_{\kds}}\phi_q(\vm y^i - \vm x^{\text{opt}}) 
\end{matrix}\right)
\label{2nd_M}
\end{equation}
with $\vm M_{\text{2d},i}\in \RR^{(\kds^2+\kds)/2 \times q}$ and
\begin{equation}
\vm b_{\text{2d},i} =
\left(\begin{matrix}
\frac{\partial^2}{\partial x_1 \partial x_1}f(\vm y^i)  \\
\frac{\partial^2}{\partial x_1 \partial x_2}f(\vm y^i)  \\
\vdots  \\
\frac{\partial^2}{\partial x_1 \partial x_{\kds}}f(\vm y^i)  \\
\frac{\partial^2}{\partial x_2 \partial x_2}f(\vm y^i)  \\
\vdots  \\
\frac{\partial^2}{\partial x_{\kds} \partial x_{\kds}}f(\vm y^i) 
\end{matrix}\right)
\in \RR^{(\kds^2+\kds)/2}.
\label{2nd_b}
\end{equation}

Utilizing that the basis $\Phi$ is defined by~\eqref{eq:MonomialBasis2} and assuming the second order derivatives are available for all directions, i.e., $\kds=n$, $\vm M_{\text{2d},i}$ can be simplified to
\begin{equation}
\vm M_{\text{2d},i}^{\text{simpl.}} =
\left(\begin{matrix}
0 & \dots & 0 & 1 & & \\
\vdots & & \vdots & & \ddots & \\
0 & \dots & 0 &  & & 1
\end{matrix}\right),
\label{2nd_M_simple}
\end{equation}
i.e., the linear part vanishes and the quadratic part is given by the identity matrix.

\section*{Acknowledgments}
The work of Mona Fuhrländer is supported by the Graduate School CE within the Centre for Computational Engineering at Technische Universität Darmstadt.




\end{appendices}




\end{document}